\newtheorem{Definition}{Definition}
\newtheorem{Problem}{Problem}
\newtheorem{Lemma}{Lemma}
\newtheorem{Algorithm}{Algorithm}
\newtheorem{Policy}{Policy}
\newtheorem{Remark}{Remark}
\newtheorem{Baseline}{Baseline}
\title{Joint Optimization of File Placement and Delivery in Cache-Assisted Wireless Networks with Limited Lifetime and Cache Space}
\author{

	\IEEEauthorblockN{Bojie Lv, Rui Wang, Ying Cui, Yi Gong and Haisheng Tan}

%The Southern University of Scienece and Technology\\
%Email: \{lvbj, huanglx\}@mail.sustc.edu.cn, wang.r@sustc.edu.cn
\thanks{
Bojie Lv, Rui Wang and Yi Gong are with the Department of Electrical and Electronic Engineering, Southern University of Science and Technology, China. Ying Cui is with the Department of Electronic Engineering, Shanghai Jiao Tong University, China. Haisheng Tan is with the School of Computer Science and Technology, University of Science and Technology of China.
	
Part of this work has been accepted in IEEE GLOBECOM 2018 \cite{Ruiwang2018_GLOBECOM}. We have extended the conference paper by revising the low-complexity scheduling policy design in Section \ref{sec:approximation}, developing a novel reinforcement learning algorithm in Section \ref{sec:learning}, improving the bounds on the approximate value functions in Section \ref{Sub:Sec:bounds}, and generating more illustrative simulation results.} 
} 
\begin{document}
\columnsep 0.22in

\maketitle
\begin{abstract}
In this paper, the scheduling of downlink file transmission in one cell with the assistance of cache nodes with finite cache space is studied. Specifically, requesting users arrive randomly and the base station (BS) reactively multicasts files to the requesting users and selected cache nodes. The latter can offload the traffic in their coverage areas from the BS. We consider the joint optimization of the abovementioned file placement and delivery within a finite lifetime subject to the cache space constraint. Within the lifetime, the allocation of multicast power and symbol number for each file transmission at the BS is formulated as a dynamic programming problem with a random stage number. Note that there are no existing solutions to this problem. We develop an asymptotically optimal solution framework by transforming the original problem to an equivalent finite-horizon Markov decision process (MDP) with a fixed stage number. A novel approximation approach is then proposed to address the curse of dimensionality, where the analytical expressions of approximate value functions are provided. We also derive analytical bounds on the exact value function and approximation error. The approximate value functions depend on some system statistics, e.g., requesting users' distribution. One reinforcement learning algorithm is proposed for the scenario where these statistics are unknown.
\end{abstract}

\section{introduction}\label{sec:intro}

Caching is a promising technology to save the transmission resource and improve the spectrum efficiency for cellular networks. In this paper, we consider a flexible deployment scenario of a cache-enabled cell where there is no wired connection or dedicated spectrum between the base station (BS) and cache nodes. Thus the cache nodes and requesting users receive popular files simultaneously via downlink multicast. Moreover, the timeliness of popular files is considered as in \cite{Leconte2016}, and the transmission within a finite file lifetime is optimized via the approach of finite-horizon {\em Markov decision process (MDP)} and {\em reinforcement learning}.

\subsection{Related Works}

There have been a number of works on the optimization of file placement with the cache space constraint. It's obvious that cache nodes should store the most popular files if each user can get access to only one cache node. However, the papers \cite{W.Choi2016,K.B.Huang2017} showed that caching files randomly with optimized caching probabilities is better than storing the most popular files at each cache node when one user can be served by multiple cache nodes. In \cite{JunZhang2016}, the authors took user mobility into consideration, where each user can move among the service regions of different cache nodes. They proposed a file placement policy to improve data offloading rate. 
With the help of historical file request information, the authors in \cite{Shin2019} proposed a file placement and update method at the cache node via predicting the arrival distribution of future file requests. The optimal file placement strategies were designed in \cite{cui2019-TCOM} in the cases of imperfect and unknown file popularity distributions.  In \cite{tao2019}, the authors considered unmanned aerial vehicles as the users and designed a probabilistic file placement method to maximize the average successful file download rate.
Moreover,  there are also some works on the design of coded caching schemes \cite{coded_cache_1,M.Tao2017}. With cached files, the authors in \cite{Tao2016} designed a multicast beamforming policy to minimize the weight sum of the backhaul cost and transmit power at the BS, and the paper \cite{cui2016} formulated the joint minimization of the average delay and power consumption at the BS as a stochastic optimization problem. In all the above works, the cost of file placement at cache nodes is not taken into consideration, as it is assumed to be completed before the phase of file delivery to the requesting users. In practice, however, there may not be sufficient time for file placement before users' requests, e.g., real-time news. 

When the phases of file placement (at cache nodes) and delivery (to requesting users) occur simultaneously, joint scheduling of both phases becomes necessary. For example, a file placement and delivery framework for heterogeneous networks was investigated in \cite{Ansari2016}, where cache node association of requesting users and coded file placement are jointly optimized to maximize the overall throughput in each frame. In \cite{cui2016gc}, an optimal caching and user association policy was proposed to minimize the latency in a cached-enabled heterogeneous network with wireless backhaul. In the above works, the files are delivered to small BSs via dedicated backhaul links, i.e., there is no resource sharing between file placement and delivery. When there is no dedicated link or period for file placement at cache nodes, file placement and delivery can be simultaneously conducted using multicast \cite{Cui2017}. This yields a coupling relationship between the transmission resource consumption and file placement. For example, if more resource is spent on downlink multicast, files will be cached in more cache nodes, which may save the downlink resource in future transmissions. As a result, a joint optimization of file placement and delivery with the consideration of the total transmission resource consumption at the BS becomes inevitable. Moreover, it is of practical value to model the file requests as a temporal and spatial random process. Hence, dynamic programming can be utilized to address the joint optimization of file placement and delivery. This issue was initially studied in our previous work \cite{Lv2019}. Specifically, we considered a random number of  requests on multiple popular files without cache space limitation in \cite{Lv2019}, where the scheduling design for multiple files can be equivalently decoupled as single-file scheduling problem. The multi-file case with limited cache space at the cache nodes has not addressed.

Dynamic programming via MDP has been considered in resource allocation of wireless systems \cite{Moghadari2013,cui2010,Dechene2010,Wang2013,RuiWang2011,cui2012-TIT,Han2018,huang2019mdpbased} or information systems \cite{Han2016,Han2019,lv2019cooperative}. For example, infinite-horizon MDP was used to optimize the cellular uplink transmissions \cite{Moghadari2013,cui2010}, downlink transmissions \cite{Dechene2010}, and relay networks \cite{Wang2013,RuiWang2011}, where the average transmission delay is either minimized or constrained. Moreover, low-complexity solutions were considered in the abovementioned works to avoid the curse of dimensionality \cite{Shewhart2011Approximate}. Note that popular files to be stored at cache nodes usually have a finite lifetime, and hence infinite-horizon MDP adopted in the aforementioned works may not be suitable for joint optimization of file placement and delivery anymore. Nevertheless, the finite-horizon MDP is usually more complicated \cite{FHMDP}, and designing low-complexity algorithrms for finite-horizon MDP is still an open issue.

\subsection{Our Contributions}

In this paper, we consider the downlink transmission of popular files in a cache-enabled cell within a finite file lifetime. The popular files may not be stored in the cache nodes at the very beginning of the lifetime. The arrival of requesting users is random in both temporal and spatial dimensions. When one file is requested, the BS reactively multicasts it to the requesting user (a.k.a. file delivery) as well as some chosen cache nodes (a.k.a. file placement) according to the channel and cache status. With the decoded files, the cache nodes can serve the following requesting users in its coverage region via different spectrum from the downlink (e.g., Wi-Fi) as in \cite{May2016,Molisch2016}. Therefore, the current file transmission may lead to the update of the cache status, which affects the future file transmissions in the remaining lifetime. In this paper, the main contributions on the optimization and analysis of the transmission scheduling are summarized below.
\begin{itemize}
\item We consider the joint optimization of the file placement and delivery within a finite lifetime subject to the cache space constraint, and propose a novel optimization framework. In particular, we formulate the scheduling of multicast power and symbol number at the BS as a dynamic programming problem, where the goal is to minimize the average transmission cost (weighted sum of the transmission energy and symbol number) at the BS by offloading the traffic to the cache nodes. Due to the cache space limitation, less popular files stored at one cache node in the early stage of the lifetime and may be replaced with more popular files. This complicated replacement has not been address in our previous work \cite{Lv2019}, which focused on the transmission scheduling of one file only. 

\item Since the number of file requests in the lifetime is random, the dynamic programming problem formulation has a random stage number, and there are no existing solutions to this problem to our best knowledge. The main difficulty is that the remaining number of stages in the Bellman's equations is unknown. We address this issue by proposing a novel framework to  equivalently transform the original problem to a finite-horizon MDP with a fixed number of stages. In \cite{Lv2019}, however, the dynamic programming problem with a random stage number is transformed to a finite-horizon MDP in a heuristic way, losing the optimality.

\item In order to address the curse of dimensionality, a novel approach of value function approximation is proposed for the abovementioned finite-horizon MDP, where the approximate value functions can be calculated using analytical expressions efficiently and effectively. Instead of numerical algorithms that are computationally expensive. We also provide tight analytical upper and lower bounds on the exact value function (which represent the minimum average transmission cost). The approach of value function approximation in \cite{Lv2019} cannot be applied in this paper, due to the different definitions of value functions. Moreover, it is difficult to obtain an analytical upper bound on the minimum average transmission cost via the approach in \cite{Lv2019}.

\item The expressions of approximate value functions rely on some system statistics, e.g., the distribution of requesting users and the popularities of files. In the case where the priori knowledge on these system statistics is not available, a novel reinforcement learning algorithm is proposed to evaluate the approximate value functions in an online manner. The issue of unknown file popularity is not addressed in \cite{Lv2019}.

\end{itemize}

It is shown by simulations that, compared with some baseline schemes, the proposed low-complexity algorithm based on value function approximation can significantly reduce the average transmission cost at the BS.

\begin{figure}[tb]
	\centering
	\includegraphics[height=200pt,width=400pt]{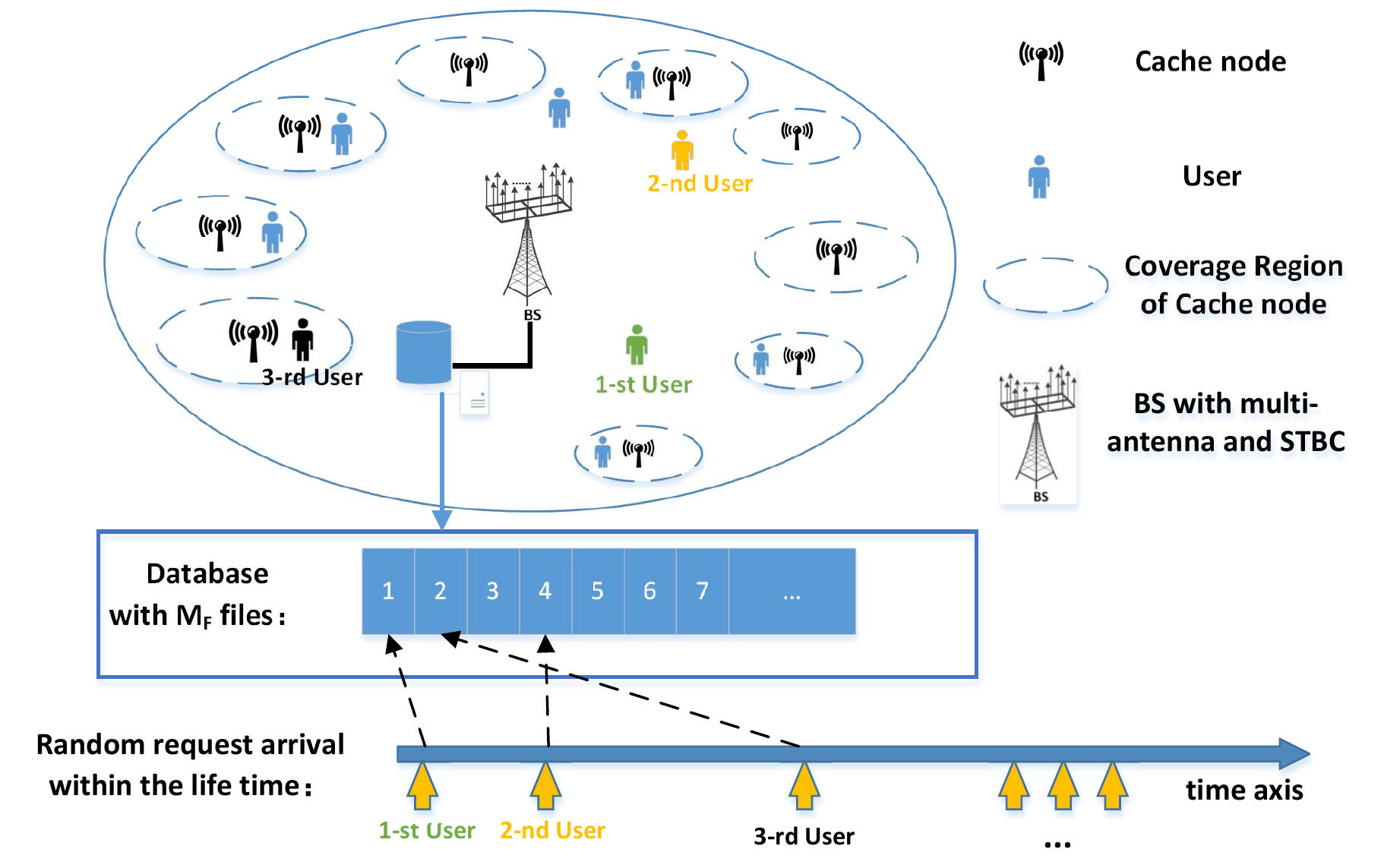}
	\caption{Illustration of network model with one BS, multiple wireless cache nodes and a file database, where one example of random spatial and temporal arrivals of requesting users is provided.}
	\label{fig:scheme}
\end{figure}

\section{System Model}\label{sec:model}

\subsection{Cache-Enabled Network Model}
 
As illustrated in Fig. \ref{fig:scheme}, we consider the downlink file transmission in a cell with one BS and $N_C $ cache nodes. The BS has $ N_T $ transmission antennas. Each cache node has a receiving antenna. Let $\mathcal{C}\subset \mathbb{R}^2$ be the service region of the cell, $ \mathcal{C}_c $ be the service region of the $ c $-th cache node  and $ \mathcal{C}_0 \triangleq \mathcal{C} -\mathcal{C}^*  $ be the region not served by any cache node, where $c=1,2,...,N_C$ and $ \mathcal{C}^* \triangleq \bigcup \limits_{c=1}^{N_C}\mathcal{C}_c$. It is assumed that the service regions of any two cache nodes are not overlapped, i.e., $ \mathcal{C}_i \cap  \mathcal{C}_j = \emptyset$ for all $ i \neq j$, $i,j=1,...,N_C $. A library with $ M_F $ files, denoted as $ \mathcal{F} \triangleq \{1,2,...,M_F\}$,  is accessible to the BS. For convenience of illustration, it is assumed that each file consists of $ R_F $ information bits, and the $c$-th cache node can store at most $ M_c $ files ($ M_c \leq M_F$). Our proposed algorithms can be easily extended to the scenario with different file sizes. In addition, it is assumed that there is at most one user requesting a file in each frame. The locations of the requesting users are independent and identically distributed (i.i.d.) in the cell according to a certain spatial distribution $ \mathcal{D} $. The distribution intensity at location $ \mathbf{l} \in \mathcal{C}$ is denoted as $ \rho_{\mathcal{D}}(\mathbf{l}) $, and the probability that a requesting user falls in an area $ \mathcal{S} \subseteq \mathcal{C} $ is $\int_{\mathcal{S}} \rho_{\mathcal{D}}(\mathbf{l}) d s(\mathbf{l}). $

The probability of the $ f $-th file being requested by one user in each frame is $ \beta p_f$, where $ \beta \in [0,1]$ is the probability that there is one user request in one frame,\footnote{in practice, the transmission of one file usually lasts over a large number of frames. The probability of more than one new request per frame is assumed to be negligible. Otherwise, the traffic of the network cannot be stabilized.} and $ p_f \in (0,1)$ is the probability that the requested file is the $f$-th one (the popularity of the $ f $-th file). Note that  $ \sum_{f=1}^{M_F} p_f = 1 $. Without loss of generality, we assume $ 1 \geq p_1 \geq p_2 \geq ... \geq p_{M_F} \geq 0$. In this paper, we do not have any restriction on the popularity distribution $ \{p_f|\forall f\} $. For example, it can be a Zipf distribution as in \cite{Web_cache}.

We consider a finite common lifetime $\mathfrak{L}$ for the file library, which usually lasts over several hours. The concept of lifetime captures the practical scenario where the popularity of a file (e.g. video news) may drop down quickly after a certain period. Suppose that there are $ L $ frames in the common file lifetime $ \mathfrak{L} $, i.e. $ \mathfrak{L} \triangleq \{1,2,...,L\} $. $ L $ is usually large. Let ${N}_R$ be the total number of file requests during the lifetime $ \mathfrak{L} $, which is a random variable with probability mass function (PMF) $
\Pr({N}_R=n)=\binom{L}{n}\beta^n (1-\beta)^{L-n}$, $ \forall n=0, 1,..., L. $

 At the beginning of the lifetime $ \mathfrak{L} $, the cache nodes may be empty or have stored some of the files. We have no requirement on the cache status at the beginning of the lifetime. A requesting user will download the requested file from one cache node (corresponding to traffic offloading from the BS) if it is in the coverage region of that cache node and the requested file has already been stored there; otherwise, it will download the file from the BS. In order to save downlink transmission resource, the BS can simultaneously transmit (i.e. multicast) the requested file to the user and some cache nodes. Hence, there are two types of file transmissions in the network, namely {\em BS multicast} and {\em cache node unicast}. The former refers to the downlink file delivery and placement from the BS to both a requesting user and possibly some selected cache nodes (which will be optimized later), when the requested file cannot be obtained from a cache node. The later is for the file delivery from any cache node to a requesting user using Wi-Fi, bluetooth, or other air interfaces, which is in different spectrum from the downlink transmission \cite{May2016,Molisch2016}. Compared with many existing literature on static caching, the receiving cache nodes optimization of BS multicast in the whole lifetime $ \mathfrak{L} $ is a dynamic caching problem with a random number of stages (requests): the cached files will be updated after each multicast, and the status of cached files affects the future transmission cost. 
 
\begin{figure}[tb]
	\centering
	\includegraphics[height=160pt,width=330pt]{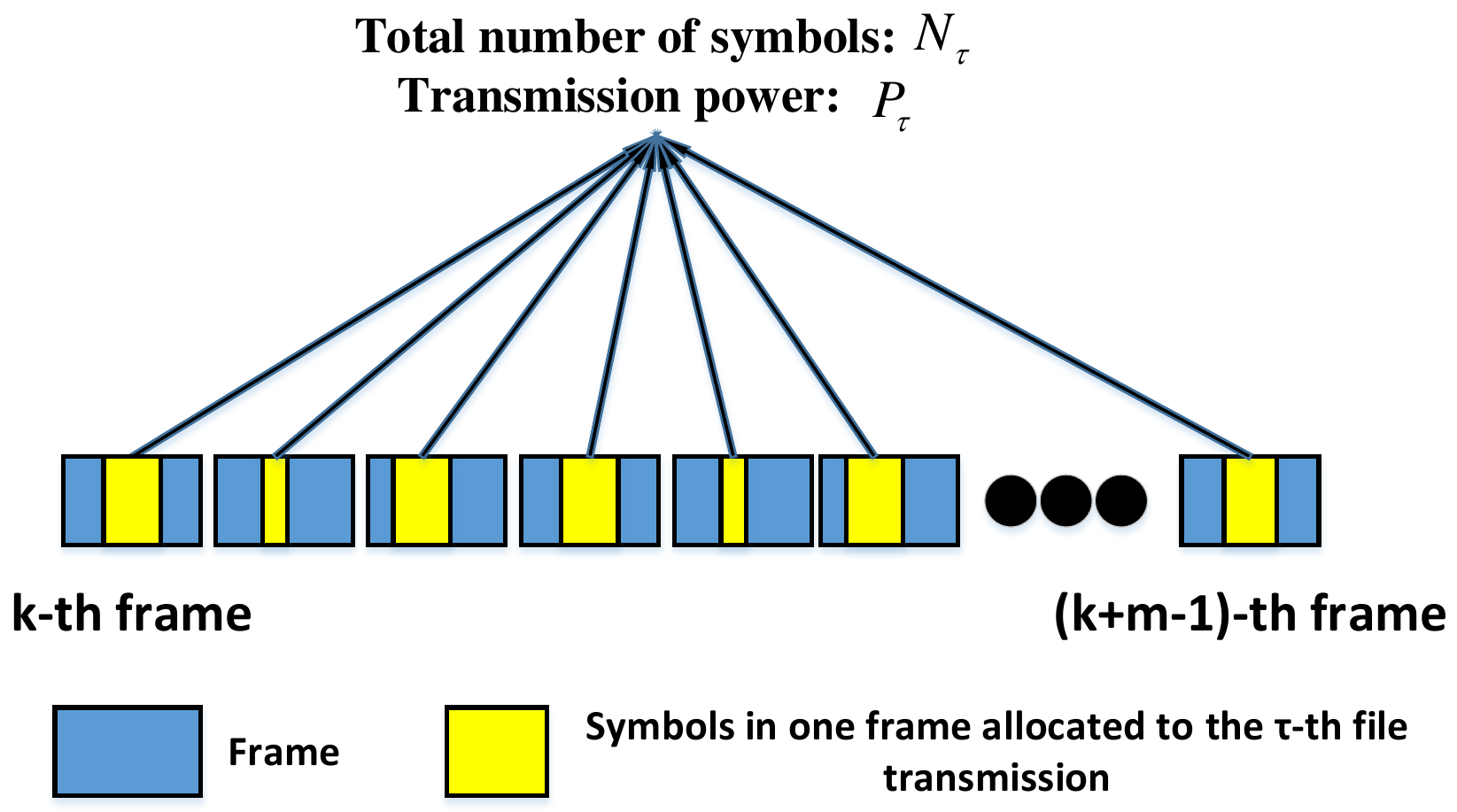}
	\caption{Illustration of multicast frame allocation for one file starting at the k-th frame.}
	\label{fig:frame}
\end{figure}

\subsection{Physical Layer Model of BS Multicast}

In this paper, it is assumed that the file size $ R_F $ is large, and the downlink transmission of each file is over a large number of physical-layer frames (we shall refer to the physical-layer frame as frame in the remaining of this paper). In practical systems, the downlink frame is not dedicated for one popular file's multicast. It may also carry transmission symbols of other traffics as illustrated in Fig. \ref{fig:frame}. In this paper, we do not specifiy the frame allocation of each file multicast. Instead, we focus on the scheduling of total number of symbols for each file multicast, which is in a larger time granularity than frame allocation. Because the file multicast consumes a large number of frames, the ergodic capacity averaged over small-scale channel fading can be achieved at each receiver. At the BS, each requested file is encoded in a rateless manner to an arbitrary number of modulation symbols. The BS determines the power and the number of modulation symbols to be sent for the multicast of each requested file. Let $ N_{\tau} $ be the number of downlink multicast symbols allocated for the $ \tau $-th file request, and $ P_{\tau} $ be the multicast power of these symbols.  
The following peak power constraint shall be satisfied.
\begin{align}\label{eqn:peak_power}
\textbf{Peak power constraint: }P_\tau \leq P_B, \forall \tau,
\end{align}
where $P_B$ is the maximum transmission power for file multicast at the BS.

The requesting user or a cache node is able to decode the file as long as its corresponding ergodic capacity of $N_{\tau}$ modulated symbols is greater than the file size $ R_F $. In this paper, we do not have any restriction on the file transmission  delay, i.e, there is no upper-bound constraint on $ N_{\tau} $ ($ \forall \tau $).

The space-time block code (STBC) with full diversity (e.g., Alamouti code) is used in the physical layer for BS multicast for the following two reasons. Firstly, it does not rely on the channel state information (CSI) at the transmitter (CSIT). Secondly, the diversity gain can be achieved at all the receivers. We refer to the user sending the $\tau$-th request as the $\tau$-th user. Let $ \rho_{\tau} $ and $ \rho^c $ be the pathlosses from the BS to the $ \tau $-th user and the $ c $-th cache node respectively, and let $ \eta_{\tau} $ and $ \eta_{\tau}^c $ be the corresponding shadowing coefficients during the $\tau$-th file transmission. Assume that $ \rho_{\tau} $, $ \eta_{\tau} $ and $ \eta_{\tau}^c $ are quasi-static within one file transmission, and change independently and identically over different transmissions. Following the capacity of full diversity
	STBC in \cite{paulraj2003introduction}, the maximum number of information bits that can be delivered from the BS to the $ \tau $-th user given transmission parameters ($P_{\tau},N_{\tau}$) is written as
\begin{equation}\label{eqn:dl-rate}
	R_{\tau} =  N_{\tau} \mathbb{E}_{\mathbf{h}_{\tau} }\left[ \alpha\log_2 \left( 1 + \frac{||\mathbf{h}_{\tau}||^2 P_{\tau}}{N_T \sigma^2_z} \right) \right],
\end{equation}
where $ \alpha $ is the code rate of the adopted full-diversity STBC. For example, when $ N_T = 2 $ and the Alamouti code is used, $ \alpha=1 $; $ \alpha $ is usually less than $ 1 $ for $ N_T>2 $. $\sigma^2_z$ is the average power of noise and inter-cell interference, and $ \mathbf{h}_{\tau}\in \mathbb{C}^{N_T} $ is the i.i.d. channel vector from the BS to the $\tau$-th user. Each element of $ \mathbf{h}_{\tau}\in \mathbb{C}^{N_T} $ follows a complex Gaussian distribution with zero mean and variance $ \rho_{\tau}\eta_{\tau} $. Equation \eqref{eqn:dl-rate} is the ergodic channel capacity over a large number of frames, the randomness in small-scale fading is then averaged. $ R_{\tau} $ depends only on the transmission parameters ($ P_{\tau}, N_{\tau} $) and the large-scale fading coefficient $ \rho_{\tau} \eta_{\tau} $. Hence, the following decoding constraint should be satisfied at the
$ \tau $-th user if it cannot receive the requested file from any cache node.
\begin{eqnarray} \label{constrain:user}
\mbox{\bf Downlink decoding constraint: }\ R_{\tau} \geq R_F, \ \forall \tau.
\end{eqnarray}

Similarly, the maximum number of information bits that can be delivered from the BS to the $ c $-th cache node given transmission parameters ($P_{\tau},N_{\tau}$) is written as $$R_{\tau}^c \!=\! N_{\tau} \mathbb{E}_{\mathbf{h}^c_{\tau}}\! \left[ \alpha\log_2 \left( \!1 \!+\! \frac{||\mathbf{h}^c_{\tau}||^2 P_{\tau}}{N_T \sigma^2_z} \!\right) \!\right],$$
%\begin{equation}\label{eqn:dl-cache}
%R_{\tau}^c = N_{\tau} \mathbb{E}_{\mathbf{h}^c_{\tau}} \left[ \alpha\log_2 \left( 1 + \frac{||\mathbf{h}^c_{\tau}||^2 P_{\tau}}{N_T \sigma^2_z} \right) \right],
%\end{equation}
where $ \mathbf{h}_{\tau}^c \in \mathbb{C}^{N_T}$ is the i.i.d. channel vector from the BS to the $ c $-th cache node with each element following the complex Gaussian distribution with zero mean and variance $ \rho^{c}\eta_{\tau}^c $. $ R_{\tau}^c $ depends only on the transmission parameters ($ P_{\tau}, N_{\tau} $) and the large-scale fading coefficient $ \rho^c \eta_{\tau}^c $. The $ c $-th cache node can decode the file only when  
$$
		R_{\tau}^c \geq R_F.
$$ 
		
The expressions of $R_{\tau}$ and $R_{\tau}^c$ depend on the pathloss and shadowing of the corresponding link. Hence, to decode one file in a BS multicast, the requesting user and cache nodes may need to accumulate different numbers of multicast symbols. By adjusting $ P_{\tau} $ and $ N_{\tau} $, the BS can control the set of receiving cache nodes. Moreover, different frame allocation schemes for these $ N_{\tau} $ symbols lead to different multicast transmission delay. For example, the transmission delay is large if each frame carries a small number of multicast symbols. We do not directly optimize the delay performance in this work since we focus on a scheduling granularity larger than frame.

\subsection{Cache Dynamics}

Let $ \{\mathcal{B}_{f,\tau}^{c}| \forall f \} $  be the cache state information (CaSI) of the $ c $-th cache node at the beginning of the $ \tau $-th file request, where $  \mathcal{B}_{f,\tau}^c \in\{0,1\}$. $ \mathcal{B}_{f,\tau}^c  =1 $ means that the $ f $-th file has been stored at the $ c $-th cache node before the $ \tau $-th request, and $ \mathcal{B}_{f,\tau}^c =0 $ otherwise.
 Let $\Delta\mathcal{B}_{f,\tau}^c = \mathcal{B}_{f,\tau+1}^c  - \mathcal{B}_{f,\tau}^c$, be the corresponding update of the CaSI for the $f$-th file at the $ c $-th cache node after the transmission of the $ \tau $-th requested file. $ \Delta\mathcal{B}_{f,\tau}^c = -1 $ means that the $ f $-th file cached at the $ c $-th cache node is replaced after the transmission of the $ \tau $-th requested file, and $ \Delta\mathcal{B}_{f,\tau}^c = 1 $ means that the $ f $-th file is cached at the $ c $-th cache node after the transmission of the $ \tau $-th requested file. Hence, we have the following constraint due to limited cache size.
\begin{eqnarray}\label{constrain:cache}
\mbox{\bf Cache size constraint: }\ \sum \limits_{f=1}^{M_F}\bigg( \mathcal{B}_{f,\tau}^c+\Delta\mathcal{B}_{f,\tau}^c \bigg) \leq M_c, \ \forall c,\tau.
\end{eqnarray}
Moreover, letting $ \mathcal{A}_{\tau} $ be the index of the $ \tau $-th requested file, we have the following constraints on the update of CaSI.
\begin{align}
\mbox{\bf CaSI update on the decoded file: }\  &\Delta \mathcal{B}_{\mathcal{A}_{\tau},\tau}^{c} \in \{0, \mathbf{I}[R_{\tau}^c\geq R_F] \}, \ \forall c,\tau ,\label{constrain:cache_action1} \\
\mbox{\bf CaSI update on the cached files: }\ 
 &\Delta \mathcal{B}_{f,\tau}^c \in \{0,- \mathcal{B}_{f,\tau}^c\}, \forall c,\tau, f \neq \mathcal{A}_{\tau}, \label{constrain:cache_action2} 
\end{align}
where $\mathbf{I}(.)$ denotes the indicator function. Note that (\ref{constrain:cache_action1}) is about the decision on whether to cache the decoded file. For instance, if the $ c $-th cache node is able to decode the multicasted file ($ \mathbf{I}[R_{\tau}^c\geq R_F] =1 $), it may store the file ($ \Delta \mathcal{B}_{\mathcal{A}_{\tau},\tau}^c=1 $) for possible later transmissions or discard the file ($ \Delta \mathcal{B}_{\mathcal{A}_{\tau},\tau}^c=0 $) due to the cache size constraint in (\ref{constrain:cache}). Moreover, (\ref{constrain:cache_action2}) is about the decision on whether to remove one file from a cache node due to limited cache space.
	
\subsection{System State and Scheduling Policy}

When a requesting user (say the $ \tau  $-th user) cannot be served by its nearby cache node, the BS should determine the downlink multicast power $ P_{\tau} $ and the number of transmission symbols $ N_{\tau} $ for BS multicast. In order to formulate the downlink scheduling problem, we first define the system state $ S $ and scheduling policy $ \Omega $ as follows.

\begin{Definition}[System State]
At the $ \tau $-th request arrival, the system state is represented by $ S_{\tau}\triangleq (\mathcal{A}_{\tau}, B_{\tau}, \zeta_{\tau})$, consisting of
\begin{itemize}
	\item Index of the requested file: $\mathcal{A}_{\tau} \in \mathcal{F} $.
	
	\item CaSI: $B_{\tau} \triangleq \{\mathcal{B}_{f,\tau}^c|\forall f\in \mathcal{F}, c=1,...,N_C \}$.
	\item Statistical channel state information (SCSI): the pathloss and shadowing coefficients of the channel from the BS to the $ \tau $-th requesting user and the shadowing coefficients of the channels from the BS to all the cache nodes, denoted as $\zeta_{\tau}  \triangleq \{(\eta_{\tau},\eta_{\tau}^c,\rho_{\tau})|\forall c=1,...,N_C\}$. 
\end{itemize}
\end{Definition}

\begin{Definition}[Scheduling Policy] \label{def:policy}
	At the $ \tau $-th request arrival, the scheduling policy $ \Omega_{\tau} $ for the $\tau$-th requesting user is a mapping from the system state $S_{\tau}$ to the following scheduling actions: BS multicast power $ P_{\tau} $ and symbol number $ N_{\tau} $; cache update $  \{\Delta\mathcal{B}_{f,\tau}^c|\forall f \in \mathcal{F}, c=1,2,...,N_C  \} $. 
	Meanwhile, the constraint (\ref{eqn:peak_power}) on peak power, the constraint (\ref{constrain:user}) on successful decoding at the $\tau$-th requesting user, cache size constraint in (\ref{constrain:cache}), and cache update constraints in (\ref{constrain:cache_action1})-(\ref{constrain:cache_action2}) should be satisfied.
\end{Definition}
Given a scheduling policy, the system state evolves as a Markov chain. In this paper, we shall minimize the average transmission resource consumptions for BS multicasts, including the multicast power and the number of transmission symbols, by optimizing the scheduling policy for all file requests $ \{\Omega_{\tau}| \forall \tau\} $. 
	\begin{Remark}[Transmission Time of Each File Multicast]
		In this paper, we assume that each file multicast can be completed within the coherent time of shadowing effect, which is usually a few seconds. This is suitable for the transmission of short video clips. For the transmission of larger files, they can be divided into a number of segments, each of which can be delivered within the coherent time of shadowing effect. Hence, we can either treat each segment transmission as one new file request, or extend the scheduling policy for one file request in Definition \ref{def:policy} by adopting different scheduling parameters for different segments. The latter extension can follow the similar approach to our previous work \cite{Lv2019}.
	\end{Remark}
	\begin{Remark}[Overlapped Transmission Period]
		It is possible that one file request (say the ($\tau+1$)-th file request) arrives in the transmission period of the previous file request (the $\tau$-th request). It may not be necessary to postpone the ($\tau+1$)-th file transmission in this case. For example, the $N_\tau$ and $N_{\tau+1}$ transmission symbols for both file multicasts may share the same frames (in Fig. \ref{fig:frame}, one frame may carry symbols for both file multicasts). Hence, the transmission periods for the two file requests can overlap, as long as each file can be delivered within the coherent time of shadowing effect. In order to maintain the latest system state in such concurrent file transmission, the BS can update the CaSI from $B_\tau$ to $B_{\tau+1}$ immediately after the decision on $\{\Delta \mathcal B^c_{f,\tau} |\forall c \}$ is made. The postpone of the ($\tau+1$)-th file transmission is necessary when it relies on the previous file transmission. For example, if the ($\tau+1$)-th file transmission is from one cache node to the requesting user, and this cache node is receiving the same file from the BS on the $\tau$-th file request. It is clear that the chance of this situation is very small. 
	\end{Remark}

\section{Problem Formulation and Optimal Policy Structure} \label{sec:formulation}

\subsection{Problem Formulation}

In practice, the BS should deliver both popular and dedicated data in downlink as in Fig. \ref{fig:frame}. The former can be assisted by cache nodes; and the latter, e.g., video call is usually dedicated to one particular user, has to be served by the BS. In this paper,  we shall minimize the transmission resource consumption at the BS by offloading traffic for popular files to the cache nodes, so that more downlink transmission resource can be spared for the delivery of dedicated data.

Specifically, let $ \mathcal{C}_{f,\tau} \triangleq \bigcup\limits_{\{c|\forall\mathcal{B}_{f,\tau}^c=1\}} \mathcal{C}_c $ be the coverage area of the cache nodes which have already decoded the $f$-th file before the $\tau$-th file request, and $ \mathbf{l}_{\tau} $ be the location of the $\tau$-th user. The resource consumption at the $ \tau  $-th file transmission, measuring the weighted sum of the transmission energy and the number of transmission symbols, is given by $$g_\tau \Big(S_{\tau}, \Omega_{\tau}(S_{\tau})\Big) \triangleq(P_{\tau}N_{\tau} + w N_{\tau}) \times \mathbf{I}  [\mathbf{l}_{\tau} \notin \mathcal{C}_{\mathcal{A}_{\tau},\tau}],$$ 
where $ w$ is the weight on the number of transmission symbols. The average total cost of the BS in the whole lifetime is given by $$
\overline{G} (\{\Omega_1, \Omega_2, ...\}) \triangleq \mathbb{E}_{\mathcal{A},\zeta,{N}_R}\Big \{ \sum_{\tau=1}^{ {N}_R}  
g_{\tau} \Big(S_{\tau}, \Omega_{\tau}(S_{\tau})\Big) \Big\},$$
where $ \zeta \triangleq \{\zeta_\tau|\forall \tau=1,...,N_R\}$ and $ \mathcal{A}\triangleq\{\mathcal{A}_\tau|\forall \tau=1,...,N_R\}$. The expectation is taken over all possible large-scale channel fading, requested files and the total request number in the lifetime $N_R$. As a result, the transmission design in this paper can be formulated as the following stochastic optimization problem.
\begin{Problem}[Optimization with Random Number of Requests] \label{prob:main}
\begin{eqnarray}
	&\min\limits_{\{\Omega_1, \Omega_2, ...\}} & \overline{G}(\{\Omega_1, \Omega_2, ...\}) =\mathbb{E}_{\mathcal{A},\zeta,{N}_R}\bigg \{ \sum_{\tau=1}^{ {N}_R}  
	g_{\tau} \bigg(S_{\tau}, \Omega_{\tau}(S_{\tau})\bigg) \bigg\} \nonumber \\
	&s.t.& \mbox{Constraints in } (\ref{eqn:peak_power}),  (\ref{constrain:user})-(\ref{constrain:cache_action2}).\nonumber
\end{eqnarray}
\end{Problem}

		\begin{Remark}[Interpretation of Problem \ref{prob:main}]
			 Problem \ref{prob:main} is about the BS's scheduling of file multicast within a finite lifetime, where the file requests arrive randomly. The file transmissions are offloaded from the BS, if the requesting users can be served by some cache nodes. Otherwise, the BS should make sure the requesting users are able to decode the requested file from downlink multicast, as shown in the hard decoding constraint \eqref{constrain:user}. Meanwhile, the BS can also choose some cache nodes as the receivers of the downlink multicast. Hence, each file multicast updates the cached files at the cache nodes, which affects the probability of traffic offloading in the future. Different choices of receiving cache nodes in each downlink multicast may lead to different probability of future traffic offloading and different overall transmission cost (objective of Problem \ref{prob:main}). Problem \ref{prob:main} is to find the best selection policy of receiving cache nodes for all file multicasts, via the adaptation of multicast power and symbol number.
		\end{Remark}

\subsection{Structure of Optimal Scheduling Policy} \label{sec:optimal}

Since the number of stages $N_R$ in Problem \ref{prob:main} is random, the standard approach for finite-horizon MDPs with a fixed number of stages \cite{Bertsekas2000Dynamic} cannot be adopted here. An intuitive explanation is as follows: in conventional finite-horizon MDP, the policy at one stage is determined according to the system state and the optimized cost of the remaining stages; in Problem \ref{prob:main}, the number of remaining stages is random, and the optimized cost of the remaining stages cannot be derived by the backward induction. Hence, we first revise the definition of \emph{value function} of MDP such that the Bellman's equations can be extended to the case of a random stage number. Specifically, let $W_k(S_k)$ be the revised value function for the $k$-th file request and system state $S_k$, which measures the minimum average remaining cost from the $ k $-th request to the last ($ N_R $-th with $ N_R $ being a random variable) request given the current system state. That is, 
\begin{eqnarray} 
{\mbox{\bf Revised Value Function: }}W_k(S_k)\triangleq&\min\limits_{\{\Omega_k, \Omega_{k+1}, ...\}} &\mathbb{E}_{\mathcal{A},\zeta,N_R}\bigg\{\sum_{\tau=k}^{N_R}g_{\tau}(S_{\tau}, \Omega_{\tau})\mathbf{I}(k \leq N_R) \bigg| S_k \bigg\} \nonumber\\
 &s.t.& \mbox{Constraints in } (\ref{eqn:peak_power}),   (\ref{constrain:user})-(\ref{constrain:cache_action2}). \nonumber
\end{eqnarray}
Compared to the conventional definition of value function, the above definition of $W_k(S_k)$ involves an extra indicator function $ \mathbf{I}(k \leq N_R) $ and an expectation on $ N_R $, which count for the randomness of $ N_R $ and the situation that $ N_R $ may be smaller than $ k $. Then, the optimal solution of Problem \ref{prob:main} can be deduced via the following revised Bellman's equations.
 \begin{Lemma}[Revised Bellman's Equations for MDP with Random Number of Stages] \label{lem:MDP with random stage}
The value function $W_k(S_k), \forall S_k, k=1,2,\dots,L,$ satisfy the following Bellman's equations.
\begin{align}
	W_{k}(S_{k}) = &\min\limits_{\Omega_{k}(S_{k})} \bigg\{ g_{k}\bigg(S_{k}, \Omega_{k}(S_{k})\bigg)\Pr(N_R\geq k)
	+  \sum_{S_{k+1}} W_{k+1}(S_{k+1})\Pr\bigg[S_{k+1}\bigg|S_{k},\Omega_{k}(S_{k})\bigg]  \bigg \}, \nonumber\\
	&s.t. \quad \mbox{Constraints in } (\ref{eqn:peak_power}),  (\ref{constrain:user}) - (\ref{constrain:cache_action2})\nonumber,
\end{align}
 	where  $ {S}_{k + 1} $ denotes the system state at the $ (k+1) $-th request, and $W_{L+1}(.)\equiv0$ for notation convenience. 
\end{Lemma}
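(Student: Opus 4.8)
The plan is to run the dynamic programming principle, adapted to a random horizon, by backward induction on $k$. The enabling observation is structural: since a request occurs in each frame independently with probability $\beta$ and the attributes of the requests (requested file index, channel and shadowing realizations, user location) are i.i.d.\ and independent of which frames carry requests, the count $N_R$ is independent of the sequence of request ``marks'', hence of the entire state trajectory $\{S_\tau\}$ generated under any fixed feasible policy. I would therefore work with the extended process in which $S_\tau$ is defined for every $\tau\in\{1,\dots,L\}$ from the first $\tau$ marks and the policy, regardless of $N_R$; stages with $\tau>N_R$ are ``inactive'' and, through the indicator $\mathbf I(k\le N_R)$, contribute nothing to the cost. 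With this convention, conditioning on $S_k$ means conditioning on $\{\text{state at stage }k=S_k\}$ and does \emph{not} implicitly impose $N_R\ge k$, so that $\Pr(N_R\ge k\mid S_k)=\Pr(N_R\ge k)$.

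First I would split the cost-to-go at $\tau=k$. Using $\mathbf I(k\le N_R)=\mathbf I(N_R\ge k)$ and the empty-sum convention,
\begin{align}
\mathbf I(N_R\ge k)\sum_{\tau=k}^{N_R} g_\tau(S_\tau,\Omega_\tau) &= g_k\big(S_k,\Omega_k(S_k)\big)\,\mathbf I(N_R\ge k) + \mathbf I(N_R\ge k+1)\sum_{\tau=k+1}^{N_R} g_\tau(S_\tau,\Omega_\tau),\nonumber
\end{align}
where replacing $\mathbf I(N_R\ge k)$ by $\mathbf I(N_R\ge k+1)$ in the tail is valid because the tail sum is empty when $N_R=k$ and everything vanishes when $N_R<k$. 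Taking $\mathbb E_{\mathcal A,\zeta,N_R}[\,\cdot\mid S_k]$, the first term becomes $g_k(S_k,\Omega_k(S_k))\Pr(N_R\ge k)$, since $g_k$ is a deterministic function of $S_k$ and the action $\Omega_k(S_k)$ while $N_R\perp S_k$. For the second term I would apply the tower property, conditioning additionally on $S_{k+1}$, and then use the Markov property of $\{S_\tau\}$ under a fixed policy together with $N_R\perp\{S_\tau\}$: the inner conditional expectation then depends on $S_{k+1}$ and the continuation policy alone, and under the optimal continuation policy it equals $W_{k+1}(S_{k+1})$.

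It then remains to interchange the minimization with the expectation. Since the immediate-cost term involves only $\Omega_k$, I would pull $\min_{\{\Omega_{k+1},\Omega_{k+2},\dots\}}$ through it and through the outer expectation over $S_{k+1}$: a scheduling policy is a map from the current state to actions, so the choices made at distinct states reachable at stage $k+1$ are unconstrained relative to one another and may be optimized pointwise, while $\Pr[S_{k+1}\mid S_k,\Omega_k]$ is a nonnegative weighting (a distribution over next states) that does not depend on the continuation policy; hence $\min_{\{\Omega_{k+1},\dots\}}\sum_{S_{k+1}}\Pr[S_{k+1}\mid S_k,\Omega_k](\cdots)=\sum_{S_{k+1}}\Pr[S_{k+1}\mid S_k,\Omega_k]\,W_{k+1}(S_{k+1})$. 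Combining the two terms gives exactly the stated equation. For the base case, $N_R\le L$ forces $\mathbf I(L+1\le N_R)=0$ almost surely, so the defining expectation of $W_{L+1}$ is $0$ and $W_{L+1}\equiv 0$; the downward induction from $k=L$ then verifies the equations for all $k=1,\dots,L$.

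The step I expect to be the main obstacle is justifying the factor $\Pr(N_R\ge k)$ that multiplies the immediate cost alone. This rests on (i) the extended-trajectory reading of $W_k$, so that conditioning on $S_k$ does not covertly impose $N_R\ge k$, and (ii) the independence $N_R\perp\{S_\tau\}$, which itself follows from the marked-process structure of the arrivals (i.i.d.\ marks, count independent of marks); both points deserve an explicit argument. One should also check that no separate continuation probability such as $\Pr(N_R\ge k+1\mid N_R\ge k)$ surfaces alongside the transition kernel --- it is already absorbed into $W_{k+1}$ through that value function's own indicator $\mathbf I(k+1\le N_R)$. The min--sum interchange is the remaining delicate point, but it is the classical dynamic programming argument once the feasible set is recognized as the class of state-feedback policies.
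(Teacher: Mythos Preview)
Your proposal is correct and follows essentially the same route as the paper's proof: split off the stage-$k$ cost, use the independence of $N_R$ from the state trajectory to factor out $\Pr(N_R\ge k)$, and identify the tail expectation with $W_{k+1}$ via the tower/Markov property. The paper's argument is terser---it conditions on $\{N_R\ge k\}$ rather than working with indicators throughout and then verifies that $W_{k+1}$ absorbs both the $\{N_R\ge k\}$ and $\{N_R<k\}$ cases---but the underlying decomposition and the key independence observation are identical to yours.
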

\begin{proof}
 	Please refer to Appendix A.
\end{proof}
 
The value functions $ \{W_k|\forall k\} $ are the functions of both CaSI and SCSI. As the space of the latter is continuous, the calculation of value functions for all system states is intractable. In this paper, by exploiting the independence between the distributions of the large-scale fading and requested files, and adopting the approach proposed in \cite{RuiWang2011}, we reduce the system state space. Note that the system state at the $ k $-th request can be represented by $ S_k = [\mathcal A_k, B_k, \zeta_k] $ and the distributions of $ \mathcal A_k $ and $ \zeta_k $ in each request are independent, we can obtain the following equivalent Bellman's equations with reduced state space by taking expectation over $ \mathcal A_k $ and $ \zeta_k $.
\begin{align}\label{eqn:bellman-reduce}
\widetilde{W}_{k}(B_{k}) \! =\!\! \min\limits_{\Omega_{k}(B_{k})} \! \!  \mathbb{E}_{\mathcal{A}_{k},\zeta_k} \bigg\{ \! {g}_{k}\!\bigg(\!\! S_{k}, \Omega_{k}(S_{k})\!\! \bigg)\! \Pr(N_R\!\geq\! k)
\!+\!\! \sum_{B_{k+1}} \!\! \widetilde{W}_{k+1}(B_{k+1})\!\Pr\!\bigg[\!B_{k+1}\bigg|{S}_{k},\Omega_{k}({S}_{k}) \!\bigg] \! \bigg \},
\end{align}
where  $\widetilde{W}_{k}(B_{k})\triangleq\mathbb{E}_{\mathcal{A}_{k},\zeta_k}\Big[W_{k}(S_{k})\Big|B_k\Big] $ is the new value function, and $ \Omega_{k}(B_{k}) \triangleq \{\Omega_{k}(S_{k})|\forall \mathcal{A}_{k},  \zeta_k \} $ is the aggregation of scheduling policy for all possible requested files, pathlosses and shadowing coefficients of the requesting users and the cache nodes. $ B_{k + 1} $ denotes the CaSI at the $ (k+1) $-th request. Moreover, we have the following conclusion.

\begin{Lemma}[Optimal Scheduling Policy]
The optimal scheduling policy for Problem \ref{prob:main} is
\begin{align}\label{eqn:policy}
&\Omega_{k}^{*}(S_k)\!=\!\arg\!	\min\limits_{\Omega_{k}(S_{k})} \! \bigg\{ \!{g}_{k}\bigg(\!\!S_{k}, \Omega_{k}(S_{k})\!\!\bigg)\!\Pr(N_R\geq k) 
\!+\!\! \sum_{B_{k+1}} \widetilde{W}_{k+1}(B_{k+1})\!\Pr\!\bigg[\!B_{k+1}\bigg|{S}_{k},\Omega_{k}({S}_{k})\!\bigg]  \! \bigg \}, 
\end{align}

\end{Lemma}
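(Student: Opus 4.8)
The plan is to derive the claimed optimal-policy expression as a corollary of the revised Bellman's equations in Lemma~\ref{lem:MDP with random stage} together with the state-space reduction that produced \eqref{eqn:bellman-reduce}. The one substantive point to establish is that, because Definition~\ref{def:policy} lets the stage-$k$ policy be an arbitrary measurable function of the \emph{complete} state $S_k=(\mathcal A_k,B_k,\zeta_k)$, the minimization over this policy inside the expectation $\mathbb E_{\mathcal A_k,\zeta_k}\{\cdot\}$ in \eqref{eqn:bellman-reduce} decouples across realizations of $(\mathcal A_k,\zeta_k)$, and the per-realization minimizer is exactly the right-hand side of \eqref{eqn:policy}.

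First I would record the identity
\[
\sum_{S_{k+1}}W_{k+1}(S_{k+1})\Pr\!\big[S_{k+1}\,\big|\,S_k,\Omega_k(S_k)\big]
=\sum_{B_{k+1}}\widetilde{W}_{k+1}(B_{k+1})\Pr\!\big[B_{k+1}\,\big|\,S_k,\Omega_k(S_k)\big],
\]
which holds because $S_{k+1}=(\mathcal A_{k+1},B_{k+1},\zeta_{k+1})$ with $(\mathcal A_{k+1},\zeta_{k+1})$ drawn i.i.d.\ and independently of $B_{k+1}$, and because $\widetilde{W}_{k+1}(B_{k+1})\triangleq\mathbb E_{\mathcal A_{k+1},\zeta_{k+1}}[W_{k+1}(S_{k+1})\mid B_{k+1}]$ by definition. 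Substituting this into the Bellman equation of Lemma~\ref{lem:MDP with random stage} and then taking $\mathbb E_{\mathcal A_k,\zeta_k}[\,\cdot\mid B_k]$ on both sides---legitimate since $(\mathcal A_k,\zeta_k)$ is independent of $B_k$ and of the past---reproduces \eqref{eqn:bellman-reduce}, with $\widetilde{W}_k(B_k)=\mathbb E_{\mathcal A_k,\zeta_k}[W_k(S_k)\mid B_k]$ and the aggregated policy $\Omega_k(B_k)=\{\Omega_k(S_k)\mid\forall\,\mathcal A_k,\zeta_k\}$.

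Next I would observe that, inside $\mathbb E_{\mathcal A_k,\zeta_k}\{\cdot\}$, the bracketed quantity is an integral over $(\mathcal A_k,\zeta_k)$ of terms each of which depends only on the single action $\Omega_k(S_k)$ attached to its own realization $S_k$, and that the feasibility constraints \eqref{eqn:peak_power}, \eqref{constrain:user}--\eqref{constrain:cache_action2} likewise couple only the variables belonging to that same $S_k$. Hence the minimization over $\Omega_k(B_k)$ can be carried out pointwise, and an aggregated policy is optimal if and only if, for (almost) every $(\mathcal A_k,\zeta_k)$, the action $\Omega_k(S_k)$ minimizes
\[
g_k\!\big(S_k,\Omega_k(S_k)\big)\Pr(N_R\ge k)+\sum_{B_{k+1}}\widetilde{W}_{k+1}(B_{k+1})\Pr\!\big[B_{k+1}\,\big|\,S_k,\Omega_k(S_k)\big]
\]
over the feasible set, i.e.\ $\Omega_k^{*}(S_k)$ is the minimizer displayed in \eqref{eqn:policy}. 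Applying this at every stage $k=1,\dots,L$ and invoking the optimality of the Bellman solution from Lemma~\ref{lem:MDP with random stage} completes the argument.

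The step I expect to require the most care is the interchange of minimization and expectation, that is, the claim that pointwise minimization is globally optimal; this relies crucially on the full-state-feedback structure of Definition~\ref{def:policy}, since if the policy were permitted to depend only on $B_k$ the per-realization subproblems would become coupled and the reduction would fail. A secondary technical point is attainment of the per-realization minimum: the multicast power lies in the compact interval $[0,P_B]$, the cache-update variables range over a finite set, and although $N_\tau$ is formally unbounded, the decoding constraint \eqref{constrain:user} pins it above a finite threshold while any further increase only raises $g_k$, so the relevant action set is effectively compact and a measurable minimizer $\Omega_k^{*}$ exists (uniqueness is not needed, as any minimizer may be taken).
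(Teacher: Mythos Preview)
Your proposal is correct and follows the natural route: invoke the revised Bellman equation of Lemma~\ref{lem:MDP with random stage}, reduce the future-cost term to $\widetilde W_{k+1}$ via the i.i.d.\ structure of $(\mathcal A_{k+1},\zeta_{k+1})$, and then use the full-state feedback in Definition~\ref{def:policy} to push the minimization inside the expectation pointwise. The paper does not spell out its own proof here---it simply cites the analogous argument in \cite{Lv2019}---so your derivation is exactly the kind of standard dynamic-programming reduction the authors have in mind.
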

\begin{proof}
	The proof is similar to that of Lemma 1 of \cite{Lv2019} and is omitted due to page limitation.
\end{proof}
 In the standard optimal solution algorithm for finite-horizon MDP, we could start by evaluating the value function for the last stage (the $L$-th stage) via
\begin{align}
\widetilde{W}_{L}(B_{L})=\min\limits_{\Omega_{L}(B_{L})}\mathbb{E}_{\mathcal{A}_{L},\zeta_L}\bigg\{{g}_{L}\bigg(S_{L}, \Omega_{L}(S_{L})\bigg)\Pr(N_R\geq L) \bigg\},\nonumber
\end{align}
and then evaluate the value function of its previous stage ($\widetilde{W}_{L-1}(B_{L-1})$) via
\begin{align}
\widetilde{W}_{L-1}(B_{L-1})  =& \min\limits_{\Omega_{L-1}(B_{L-1})}  \mathbb{E}_{\mathcal{A}_{L-1},\zeta_{L-1}} \bigg\{  {g}_{L-1}\!\bigg( S_{L-1}, \Omega_{L-1}(S_{L-1}) \bigg) \Pr(N_R\geq L-1)\nonumber\\
&+ \sum_{B_{L}}  \widetilde{W}_{L}(B_{L})\Pr\bigg[B_{L}\bigg|{S}_{L-1},\Omega_{L-1}({S}_{L-1}) \!\bigg]  \bigg \}.\nonumber
\end{align}

By such backward induction from the last stage to the first stage, the value functions $\widetilde{W}_{k}(B_{k})$, $\forall k=1,...,L, $ can all be calculated, and the optimal scheduling policy can be derived via \eqref{eqn:policy}. However, this optimal algorithm usually suffers from the curse of dimensionality, which will be further explained in the following section.
	
\section{Low-Complexity Scheduling Policy}\label{sec:approximation}

The computation complexity of the value functions $ \widetilde{W}_k $ ($ k = 1,...,L$) defined in (\ref{eqn:bellman-reduce}) is huge due to the following two reasons. First, the maximum possible stage number $ L $ is usually huge, and the value functions for all stages should be evaluated, unlike the infinite-horizon MDP considered in most of the existing literature \cite{cui2010,Wang2013,AnLiu2014}. Consider one example with a lifetime of $ 24 $ hours and a frame duration of $ 10 $ milliseconds. The number of frames in the lifetime is $ L= 100 \times 60 \times 60 \times 24 \approx 8.6 \times 10^6$. Moreover, the space of $ B_k $ grows exponentially with respect to the number of cache nodes and the number of files. Note that the conventional approaches for approximate MDP are mostly designed for infinite-horizon MDP. In this section, we shall propose a novel framework to approximate the value functions of finite-horizon MDP (with a random stage number), such that the computation complexity of the value functions can be significantly reduced.

Specifically, regarding the above two causes for prohibitive complexity, the approximation of the value functions consists of the two steps.\footnote{ { Note that the average number of file requests $ \beta L $ is usually much smaller than the total number of frames $ L $ (there will be traffic overflow otherwise), it may be costly and inefficient to evaluate the value functions for all $ k=1,2,...,L $.}} Firstly, we propose a flexible approximation framework in Section \ref{sub:random-stage}, where the value functions are approximated by their lower-bounds, and the approximation error can be adjusted as a trade-off of the computation complexity. Secondly, the lower-bounds are further decoupled for each file and each cache node via a novel linear approximation method in Section \ref{sub:linear-app}. In addition, the overall approximation error is analyzed in Section \ref{Sub:Sec:bounds}, and an online scheduling algorithm based on the approximate value functions is elaborated in Section \ref{sub:online}.

\subsection{Flexible Approximation Framework for Stage Number Reduction}\label{sub:random-stage}

We first introduce the following bounds on the value functions $ \widetilde{W}_k $, $ k = 1,...,L$.

\begin{Lemma} \label{lem:reduce stage number}
	Let $ M_{R}^{\epsilon} \triangleq \max\{M_R^{\epsilon} \in \mathbb{Z}_{+}|\Pr (N_R > M_R^{\epsilon}) \leq \epsilon\} $. An upper-bound on $\widetilde{W}_{k}({B}_{k}) $ ($ \forall k\leq M_R^{\epsilon} $) is given by
	\begin{eqnarray}
	\widetilde{W}_{k}({B}_{k}) 
	\leq \min\limits_{\{\Omega_{\tau}|\forall \tau\}} \bigg\{\mathbb{E}_{\mathcal{A},\zeta}\bigg[ \sum_{\tau=k}^{M_R^{\epsilon}}  
	g_{\tau} ({S}_{\tau}, \Omega_{\tau}) \Pr(\tau \leq N_R)  \bigg| {B}_{k} \bigg]\bigg\}+ \!\!\!\!\!\sum_{\tau=M_R^{\epsilon}+1}^{L} \!\!\overline{g}_{max} \Pr (\tau \leq N_R) 
	 \triangleq  \mathcal{U}_k({B}_{k}),  \nonumber
	\end{eqnarray}
	where $\overline{g}_{max}\triangleq \mathbb{E}_{\eta_{\tau}}  \Big[ 	F(\theta,P_B) \Big| \rho_{\tau}=\rho_{min} \Big]$ denotes an upper-bound on the average cost of each stage, $F(\theta,P_B)$ is given by
		\begin{align}\label{eqn:F}
	F(\theta,P_B)\triangleq \begin{cases}
	\frac{w\ln(2)R_F} {\alpha \mathbb{W}(\frac{2^{\theta}w}{e})},&	\frac{w} { \mathbb{W}(\frac{2^{\theta}w}{e})}<P_B\cr
	\frac{(P_B+w)R_F}{\log_2(P_B)+\theta},&\frac{w} {\mathbb{W}(\frac{2^{\theta}w}{e})}\geq P_B
	\end{cases}, \quad \forall \theta,
	\end{align}
	$\theta \triangleq	\mathbb{E}_{\mathbf{h}_{\tau}} \left[ \log_2 \left(  \frac{||\mathbf{h}_{\tau}||^2}{N_T \sigma^2_z}\right) \right] $, and $\mathbb{W}(x)$ is the Lambert-W function \cite{W}, $\rho_{min}$ is the pathloss from the BS to the farthest location of the cell. Moreover, a lower-bound on  $\widetilde{W}_{k}({B}_{k}) $  ($ \forall k\leq M_R^{\epsilon} $) is given by
	 	\begin{align} \label{eqn:lower-bound:reduce_system_stage}
		\widetilde{W}_{k}({B}_{k}) \geq \min\limits_{\{\Omega_{\tau}|\forall \tau\}} \mathbb{E}_{\mathcal{A},\zeta}\bigg \{ \sum_{\tau=k}^{M_R^{\epsilon}}  
		g_{\tau} ({S}_{\tau}, \Omega^{*}_{\tau}) \Pr (\tau \leq N_R) \bigg| {B}_{k} \bigg\} \triangleq  \mathcal{L}_k({B}_{k}).
	\end{align}
	\end{Lemma}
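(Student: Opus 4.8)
The plan is to prove the lower and the upper bound separately, after first rewriting $\widetilde{W}_k$ in a non-recursive form. Iterating the reduced Bellman equation \eqref{eqn:bellman-reduce} from stage $L$ back to stage $k$, with the convention $\widetilde{W}_{L+1}\equiv 0$, identifies the value function with a minimum expected weighted total cost,
\[
\widetilde{W}_k(B_k)=\min_{\{\Omega_\tau|\forall\tau\}}\mathbb{E}_{\mathcal{A},\zeta}\Big[\sum_{\tau=k}^{L} g_\tau\big(S_\tau,\Omega_\tau\big)\,\Pr(\tau\le N_R)\,\Big|\,B_k\Big],
\]
where the per-stage weight $\Pr(\tau\le N_R)$ is inherited from the $\Pr(N_R\ge k)$ factors of Lemma \ref{lem:MDP with random stage}. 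Both bounds then follow by splitting this sum at stage $M_R^\epsilon$ into a head $\sum_{\tau=k}^{M_R^\epsilon}$ and a tail $\sum_{\tau=M_R^\epsilon+1}^{L}$.

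\emph{Lower bound.} Every stage cost $g_\tau(S_\tau,\Omega_\tau)=(P_\tau N_\tau+wN_\tau)\,\mathbf{I}[\mathbf{l}_\tau\notin\mathcal{C}_{\mathcal{A}_\tau,\tau}]$ is non-negative, and so is $\Pr(\tau\le N_R)$, so the tail contribution is non-negative for every feasible policy and dropping it only decreases the objective. As the head involves only $\Omega_k,\dots,\Omega_{M_R^\epsilon}$, minimizing the head alone yields exactly $\mathcal{L}_k(B_k)$, which is \eqref{eqn:lower-bound:reduce_system_stage}. This direction is immediate.

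\emph{Upper bound.} Here I would exhibit a single feasible (generally suboptimal) policy and bound its cost: for stages $k,\dots,M_R^\epsilon$ use a minimizer of the truncated head problem (its optimal value is precisely the first term of $\mathcal{U}_k(B_k)$, and is unaffected by later actions), and for every stage $\tau>M_R^\epsilon$ use the ``serve the requesting user only'' action, i.e., multicast to the current user with no extra receiving cache nodes and no cache update. The latter is always feasible, since $N_\tau$ is unconstrained and the peak power $P_B$ gives a strictly positive ergodic per-symbol rate, so \eqref{constrain:user} can be met. For such a tail stage $g_\tau=0$ when $\mathbf{l}_\tau\in\mathcal{C}_{\mathcal{A}_\tau,\tau}$; otherwise the best $(P_\tau,N_\tau)$ solves $\min_{0<P_\tau\le P_B}(P_\tau+w)N_\tau$ subject to $N_\tau\,\mathbb{E}_{\mathbf{h}_\tau}[\alpha\log_2(1+\|\mathbf{h}_\tau\|^2 P_\tau/(N_T\sigma_z^2))]\ge R_F$. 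Lower-bounding the ergodic per-symbol rate by the high-SNR surrogate $\alpha(\log_2 P_\tau+\theta_\tau)$ (with $\theta_\tau$ denoting $\theta$ evaluated with the $\tau$-th link's large-scale gain $\rho_\tau\eta_\tau$) only inflates the symbol requirement, and minimizing $(P_\tau+w)R_F/(\alpha(\log_2 P_\tau+\theta_\tau))$ over $P_\tau\le P_B$ returns exactly $F(\theta_\tau,P_B)$: the interior branch from the stationarity condition $P_\tau\ln(P_\tau 2^{\theta_\tau}/e)=w$, whose root is expressed through $\mathbb{W}(\cdot)$, and the boundary branch from $P_\tau=P_B$. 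Hence $g_\tau\le F(\theta_\tau,P_B)$ in every state. Since $F(\cdot,P_B)$ is non-increasing in $\theta$, $\theta_\tau$ is non-decreasing in $\rho_\tau\eta_\tau$, $\rho_\tau\ge\rho_{min}$, and $\rho_\tau$ is independent of $\eta_\tau$, taking expectation over $\zeta_\tau$ gives $\mathbb{E}_{\zeta_\tau}[g_\tau]\le\mathbb{E}_{\eta_\tau}[F(\theta,P_B)\mid\rho_\tau=\rho_{min}]=\overline{g}_{max}$. Multiplying by $\Pr(\tau\le N_R)$, summing over $\tau=M_R^\epsilon+1,\dots,L$, and adding the head term gives $\widetilde{W}_k(B_k)\le\mathcal{U}_k(B_k)$.

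\emph{Main obstacle.} The routine parts are the Bellman unrolling, discarding the non-negative tail, and recombining the weighted sums. The delicate part is the tail estimate $\mathbb{E}_{\zeta_\tau}[g_\tau]\le\overline{g}_{max}$: one must justify that $\alpha(\log_2 P+\theta)$ is a legitimate rate-decreasing (hence cost-increasing) surrogate for the ergodic rate, perform the power optimization to obtain the two-branch closed form $F$ together with its Lambert-$\mathbb{W}$ expression, and check the monotonicity of $F$ in $\theta$ and of $\theta$ in $\rho$. A caveat to handle is that the surrogate rate is positive only when $\log_2 P_B+\theta>0$, i.e., in a sufficiently high-SNR regime; outside it one either takes this as a standing assumption or works directly with the exact ergodic rate.
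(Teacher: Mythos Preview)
Your proposal is correct and follows essentially the same route as the paper: unroll $\widetilde{W}_k$ into a weighted sum, drop the non-negative tail for the lower bound, and for the upper bound insert a suboptimal policy that is head-optimal on $\{k,\dots,M_R^\epsilon\}$ and greedy (serve-user-only) on the tail, then dominate each tail stage cost by $\overline{g}_{max}$. The paper's own proof is far terser---it simply asserts $\overline{g}_{max}=\max_{\rho_k,B_k}\min_{\Omega_k}\mathbb{E}_{\mathcal{A},\eta_k}[g_k]$ and defers the closed form of $F(\theta,P_B)$ to references---so your explicit Lambert-$\mathbb{W}$ optimization, monotonicity argument in $\theta$ and $\rho$, and high-SNR positivity caveat are additional detail rather than a different strategy.
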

\begin{proof}
	Please refer to Appendix B.
\end{proof}

As a remark, notice that $ \overline{g}_{max}$ is actually the minimum average transmission cost to one requesting user located at the farthest position from the BS. It is clear that when increasing $ M_R^{\epsilon} $, the lower-bound $ \mathcal{L}_k $ and the upper-bound $ \mathcal{U}_k $ tend to the exact function $ \widetilde{W}_k $, { i.e. the approximation error tends to zero}, at the cost of computation complexity and storage increase. The approximation error, in terms of $ \epsilon $, will be discussed in Section \ref{Sub:Sec:bounds}. In this paper, we shall use the lower-bound to approximate the value functions when $ k \leq M_R^{\epsilon} $, and zero to value functions when $ k>M_R^{\epsilon} $, i.e.,
\begin{equation}\label{eqn:approx-stage}
\widetilde{W}_k( B_k ) \approx \left\{ \begin{array}{cc}
	\mathcal{L}_k(B_{k}) & \ k \leq M_R^{\epsilon} \\ 
	0 & \ k > M_R^{\epsilon}
	\end{array} \right. 
\end{equation}
Direct evaluation of $ \mathcal{L}_k $ for $ k=1,2,...,M_R^{\epsilon} $ is still intractable due to the huge space of $ B_k $. In the following part, we shall further decouple $ \mathcal{L}_k $ via a novel linear approximation method.

\subsection{Linear Approximation of Value Function}\label{sub:linear-app}
Let $\{\Omega^{\dagger}_\tau | \tau=1,...,M_R^{\epsilon}\} = \arg\min_{\{\Omega_{\tau} | \tau=1,...,M_R^{\epsilon}\}} \mathbb{E}_{\mathcal{A},\zeta}\bigg \{ \sum_{\tau=1}^{M_R^{\epsilon}}  
g_{\tau} (S_{\tau}, \Omega_{\tau}) \Pr (\tau \leq N_R) \bigg\} $
be the optimal scheduling policy with fixed $ M_R^{\epsilon} $ stages. The value function for $ k=1,...,M_R^{\epsilon} $ and CaSI $ B_{k} $  can be written as 
\begin{eqnarray}
\widetilde{W}_{k}({B}_{k}) 
&\approx&  \mathcal{L}_k ({B}_{k}) = \sum_{f=1}^{M_F} \underbrace{ \sum_{c=0}^{N_C} \overbrace{ \mathbb{E}_{\mathcal{A},\zeta}\bigg[ \sum_{\tau=k}^{M_R^{\epsilon}}  
		g_{\tau} ({S}_{\tau}, \Omega^{\dagger}_{\tau}) \Pr(\tau \leq N_R) \mathbf{I}(\mathcal{A}_{\tau} = f)\mathbf{I} (\mathbf{l}_{\tau} \in \mathcal{C}_c)  \bigg|  {B}_{k} }^{\widetilde{W}_{k,f,c}({B}_{k})}  \bigg] }_{\widetilde{W}_{k,f}({B}_{k})} \nonumber\\
&=& \sum_{f=1}^{M_F} \widetilde{W}_{k,f}({B}_{k}) = \sum_{f=1}^{M_F} \sum_{c=0}^{N_C} \widetilde{W}_{k,f,c}({B}_{k}),
\end{eqnarray}
where $\widetilde{W}_{k,f}({B}_{k})$ is named as the {\em per-file value function}, approximating the average cost for the $f$-th file since the $ k $-th stage under the optimal policy $ \{\Omega^{*}_k|\forall k \} $ defined in (\ref{eqn:policy}). $\widetilde{W}_{k,f,c}({B}_{k})$ is named as the {\em per-file per-region value function}, approximating the average cost for the $f$-th file in the region $\mathcal{C}_c$ since the $ k $-th stage under the optimal policy $ \{\Omega^{*}_k|\forall k \} $. The ways to approximate value function, as introduced in \cite{Shewhart2011Approximate}, are general but short at exploiting the problem structure. Moreover, they usually require value iteration and may not be applicable for finite-horizon MDP. In the following, we shall propose a problem-specific approximation of value functions with analytical expressions. With such analytical expressions, computationally expensive numerical algorithms such as value iteration can be bypassed.

We refer to $  \mathcal{F}_{c}^H \triangleq \{1,2,...,M_c\},c=1,2,\cdots,N_C,$ and $  \mathcal{F}_{c}^L \triangleq \{M_c+1, ..., M_F\} $ as the {\em high-popularity} and {\em low-popularity} file sets in the region $\mathcal{C}_c$, respectively. If the $ f $-th file ($ f\in \mathcal{F}_c^H $) has been stored, it will never be replaced by other files. On the other hand, it is possible that $f$-th file ($ f\in \mathcal{F}_c^L $) stored at the $c$-th cache node will be replaced by other files in the future transmission due to limited cache space. The approximations of the per-file per-region value functions $ \widetilde{W}_{k,f,c} $, $\forall k, f, c$, which are denoted as $ J_{k,f,c} $, are elaborated below respectively.

\subsubsection{Approximation of $  \widetilde{W}_{k,f,0}  $} 

$ \widetilde{W}_{k,f,0} $ is the transmission cost spent  for any requesting users located in the region without cache nodes $ \mathcal{C}_0 $. In the approximation, we assume that the BS spends just sufficient transmission cost to ensure the file delivery to any requesting users in $ \mathcal{C}_0 $. Hence, the approximation of $ \widetilde{W}_{k,f,0} $, denoted as $J_{k,f,0}$, can be written as 
	\begin{align} 
	J_{k,f,0}
	\triangleq\min_{\{P_{\tau},N_{\tau}|\tau=k,...,M_R^{\epsilon}\}} \ \sum_{\tau=k}^{M_R^{\epsilon}}
	p_f\Pr[N_R\geq\tau] \Pr[\mathbf{l}_{ \tau } \in \mathcal{C}_{0}] \times\mathbb{E}_{\zeta_{\tau}}  \bigg[  P_{\tau}N_{\tau}+wN_{\tau} \bigg|\mathbf{l}_{\tau}\in \mathcal{C}_{0} \bigg] \label{eqn:J_{k,f,0}}, 
	\end{align}
	where the constraints in \eqref{eqn:peak_power} and \eqref{constrain:user} should be satisfied.
	It is clear that\footnote{Please refer to \cite{Lv2019} for the derivation of the expression.}
	\begin{align}
	J_{k,f,0} \approx \sum_{\tau=k}^{M_R^{\epsilon}}
	p_f\Pr[N_R\geq\tau]\underbrace{ \Pr[\mathbf{l}_{ \tau } \in \mathcal{C}_{0}] \times\mathbb{E}_{\zeta_{\tau}}  \bigg[F(\theta,P_B) \bigg| \mathbf{l}_{\tau}\in \mathcal{C}_{0} \bigg]}_{\mbox{Denoted as } \mu_{0}}, \label{eqn:app_J_{k,f,0}}
	\end{align}
	where the approximation is for high SINR region,  $F(\theta,P_B)$ is given by \eqref{eqn:F} and $\\\theta =	\mathbb{E}_{\mathbf{h}_{\tau}} \left[ \log_2 \left(  \frac{||\mathbf{h}_{\tau}||^2}{N_T \sigma^2_z}\right) \right] $. If the statistics of large-scale fading and {file popularity} are known, the above expectation can be calculated. Otherwise, a learning-based approach is introduced in the next section to evaluate the approximate value function.

\subsubsection{Value Function Approximation for High-Popularity Files}
{ Let ${\mathbf{b}}_{f}^c\triangleq \{\mathcal{B}_{f,k}^{c} =0\}\cup \{\mathcal{B}_{f,k}^{i} =1| \forall i\neq c\}$ be the CaSI where only the $ c $-th cache node has not decoded the $ f $-th file.  We first define $ \widehat{W}_{k,f} (\mathbf{b}_{f}^c)$ as follows.
	\begin{eqnarray}\label{eqn:w-lowerbound}
	\widehat{W}_{k,f} (\mathbf{b}_{f}^c) & \triangleq & \min \mathbb{E}_{\mathcal{A},\zeta} \bigg\{ \sum_{\tau=k}^{M_R^{\epsilon}}  
	g_{\tau} \mathbf{I}(\mathcal{A}_{\tau} = f) \Pr (\tau \leq N_R) \bigg| \mathbf{b}_{f}^c \bigg\} \\
	&s.t.& \mbox{Constraints in } (\ref{eqn:peak_power}),(\ref{constrain:user}) - (\ref{constrain:cache_action2}); \quad \Delta \mathcal{B}_{f,\tau}^i=0, \forall i \neq c ,\tau \geq k .\nonumber
	\end{eqnarray}
	$ \widehat{W}_{k,f} (\mathbf{b}_{f}^c) $ approximates the transmission cost for the $ f $-th file when the $ c $-th the cache node has not decoded the $ f $-th file, i.e., $ \widetilde{W}_{k,f,c} + \widetilde{W}_{k,f,0}$. Since $ J_{k,f,0} $ approximates the cost $ \widetilde{W}_{k,f,0} $, the per-file per-region value function $\widetilde{W}_{k,f,c}({B}_{k})$ for all $c>0$ and $f\in \mathcal{F}^H_c$ can be approximated by 
	\begin{eqnarray}\label{eqn:approx-high}
	\widetilde{W}_{k,f,c}({B}_{k}) \approx J_{k,f,c}({B}_{k})\triangleq
	\begin{cases}
	0, &\mbox{when } \mathcal{B}_{f,k}^c=1; 
	\cr \widehat{W}_{k,f}(  {\mathbf{b}}_{f}^c )-J_{k,f,0}, &\mbox{when } \mathcal{B}_{f,k}^c=0.
	\end{cases}
	\end{eqnarray}
	Moreover, $\widehat{W}_{k,f} (\mathbf{b}_{f}^c)$ can be calculated via the following backward induction:
	\begin{itemize} 
		\item {\bf{Step 1}}: Let $k=M_R^{\epsilon}+1$, and initialize $\widehat{W}_{k,f}(\mathbf{b}_{f}^c)=0$.
		
		\item {\bf{Step 2}}: Let $k=k-1$, and calculate $\widehat{W}_{k,f}( \mathbf{b}_{f}^c )$ as follows.
		\begin{align}\label{eqn:v_default one fragment }
		\widehat{W}_{k,f}( \mathbf{b}_{f}^c )&= (1-p_f) \times\widehat{W}_{k+1,f}( \mathbf{b}_{f}^c )  
		+p_f \times \upsilon_{k,f,c} 
		\end{align}
		\begin{align}
		\upsilon_{k,f,c}&= \Pr\Big[\mathbf{l}_{k} \in \mathcal{C}_{f,k}( \mathbf{b}_{f}^c )\Big] \times {\widehat{W}}_{k+1,f}( \mathbf{b}_{f}^c )\nonumber\\
		&+\Pr\Big[\mathbf{l}_{k} \notin \mathcal{C}_{f,k}( \mathbf{b}_{f}^c )\Big]\times\Big\{ \mathbb{E}_{\zeta_k}\Big[\overline{G}_{k,f}^1\Big|R_k \leq R_k^c   \Big]\Pr\Big[R_k \leq R_k^c\Big] \nonumber \\  
		&+\mathbb{E}_{\zeta_k}\Big[\min \{\overline{G}_{k,f}^2, \overline{G}_{k,f}^3\}\Big|R_k > R_k^c  \Big]\Pr\Big[R_k > R_k^c  \Big] \Big\} ,\label{eqn:v_{k,f,c}}
		\end{align}
		where $\mathcal{C}_{f,k}$ is the area (given CaSI $\mathbf{b}_{f}^c$) in which the users are able to receive the $ f $-th file from one of the cache nodes, and $\overline{G}_{k,f}^1,\overline{G}_{k,f}^2,\overline{G}_{k,f}^3$ are given by 
		\begin{align} \label{eqn:Q}
		&\overline{G}_{k,f}^1 \triangleq 	  F(\theta,P_B)\Pr[N_R \geq k] +{ {J}_{k+1,f,0}},
		\overline{G}_{k,f}^2\triangleq	 F(\theta,P_B)\Pr[N_R \geq k] +{ {\widehat{W}}_{k+1,f}(          \mathbf{b}_{f}^{c})}, \nonumber \\   
		&\overline{G}_{k,f}^3\triangleq 	  F(\theta^c,P_B) \Pr[N_R \geq k] +{ {J}_{k+1,f,0}}, \text{where } \theta^c \triangleq	\mathbb{E}_{\mathbf{h}^c_{k}} \left[ \log_2 \left(  \frac{||\mathbf{h}_{k}^c||^2}{N_T \sigma^2_z}\right) \right].
		\end{align}	
	
		\item {\bf{Step 3}}: If $k>1$, go to step 2. Otherwise, terminate. 
	\end{itemize}
}

\subsubsection{Value Function Approximation for Low-Popularity Files}
In order to approximate $ \widetilde{W}_{k,f,c} $, $ f \in  \mathcal{F}_{c}^L$, we first define the following notations.
\begin{itemize}
	\item Let $ \eta_{k,f,c}$ be the minimum transmission cost for the BS to { ensure successful file transmission when the requesting user is in the region} $ \mathcal{C}_c $ since the $\tau$-th frame, given that the $ f $-th file has not been cached at the $ c $-th cache node, i.e.,
$$
	\eta_{k,f,c}
	\triangleq\min_{\{P_{k},N_{k}\}} \ 
	p_f\Pr[N_R\geq k] \\\times\Pr[\mathbf{l}_{ k } \in \mathcal{C}_{c}] \times\mathbb{E}_{\zeta_k}  \Big[  P_{k}N_{k}+wN_{k} \Big|\mathbf{l}_{k}\in \mathcal{C}_{c} \Big], \nonumber 
$$
	where the constraints in \eqref{eqn:peak_power} and \eqref{constrain:user} should be satisfied.	It is clear that
		$$\eta_{k,f,c}{\approx} p_f \Pr[N_R\geq\tau]\Pr[\mathbf{l}_{k}\in \mathcal{C}_{c} ]\mathbb{E}_{\zeta_k}\Big[ F(\theta,P_B) \Big|\mathbf{l}_{k}\in \mathcal{C}_{c} \Big]$$ for the high SINR region.
	
	\item Let $ \mathbb{P}^{f,c}_{k,n}({B}_{k}) $ be the probability that there are $ M_c - \sum_{m=1}^{f-1} \mathcal{B}_m^c -1$ requests for the files $ \{1,2,...,f-1\} $ from the $ k $-th request to the $ (k+n-1) $-th request, and  the $ (k+n) $-th request is also for these files, given the current CaSI $ {B}_{k} $. Hence,
	\begin{eqnarray}
	{\mathbb{P}}^{f,c}_{k,n}(\!{B}_{k}\!)\!\triangleq\! \begin{cases} 0, &n\!< M_c \!-\! \sum_{m=1}^{f-1} \mathcal{B}_m^c \\ \binom{n-1}{M_c - \sum_{m=1}^{f-1} \mathcal{B}_m^c -1}	\varphi^{M_c - \sum_{m=1}^{f-1} \mathcal{B}_m^c }(1-\varphi)^{n-M_c + \sum_{m=1}^{m-1} \mathcal{B}_{m}^{c} }, &n\!\geq M_c \!-\! \sum_{m=1}^{f-1} \mathcal{B}_m^c 
	\end{cases}
	\end{eqnarray}
	where $\varphi\triangleq\sum_{\{f|f\in \mathcal{F}^H_c, \mathcal{B}_f^c=0\} } p_f$.
\end{itemize}

For $ f \in  \mathcal{F}^L_c$, { the per-file per-region value function is approximated as follows.}
\begin{eqnarray}\label{eqn:low-pop-app}
\widetilde{W}_{k,f,c}({B}_{k}) \approx J_{k,f,c}({B}_{k})\triangleq
\begin{cases}
\sum\limits_{\tau = k}^{M_R^{\epsilon}} \eta_{\tau,f,c} - \sum\limits_{n=0}^{M_R^{\epsilon}-k}\sum\limits_{\tau=k}^{k+n} \eta_{\tau,f,c} {\mathbb{P}}^{f,c}_{k,n}({B}_{k}), & \mbox{when } \mathcal{B}_{f,k}^c=1 \\
\sum\limits_{\tau = k}^{M_R^{\epsilon}} \eta_{\tau,f,c} , &\mbox{when } \mathcal{B}_{f,k}^c=0
\end{cases}
\end{eqnarray}

\begin{figure}[tb]
	\centering
	\includegraphics[height=220pt,width=400pt]{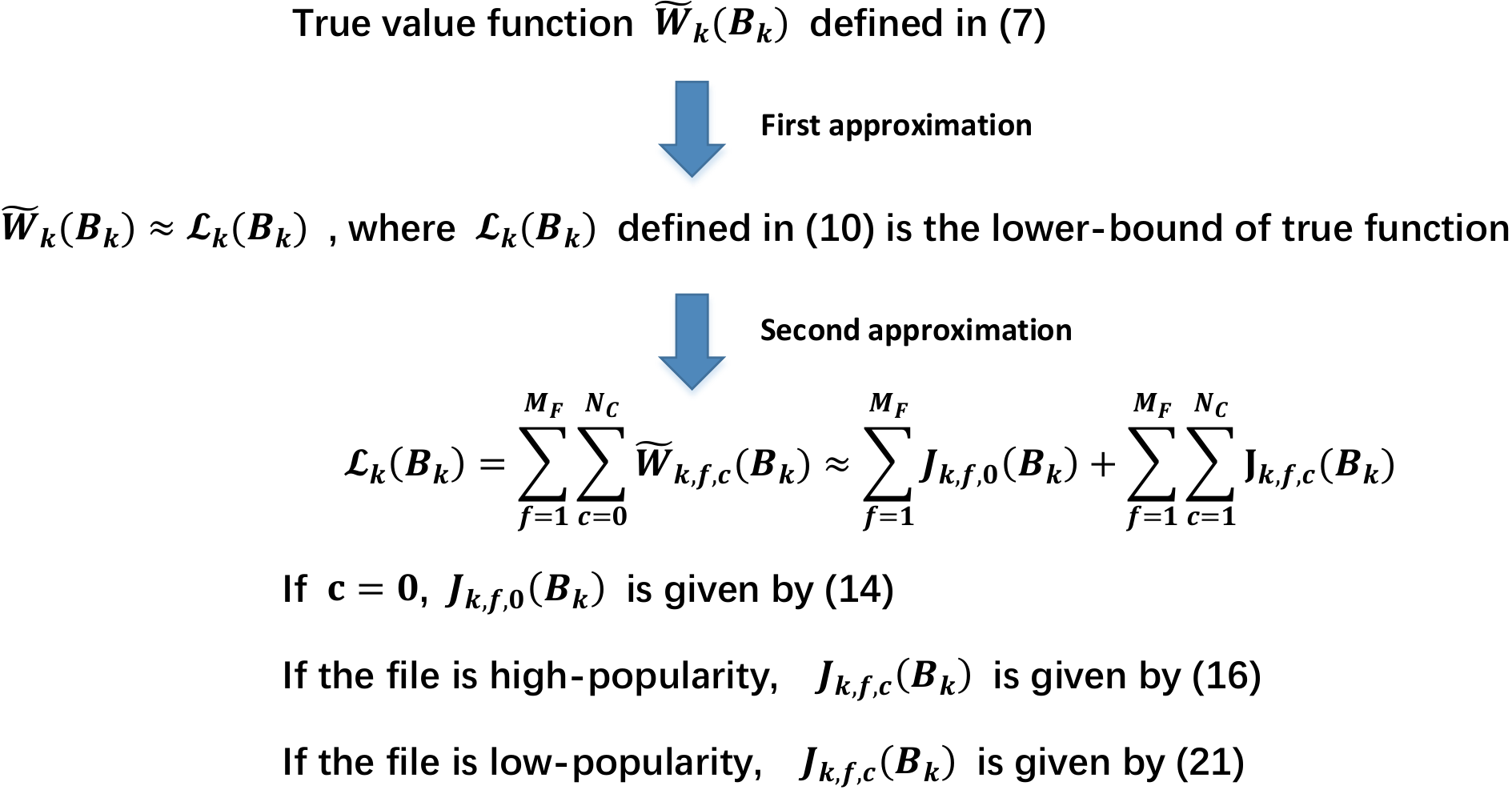}
	\caption{Illustration of overall value function approximation procedure.}
	\label{fig:appr_value_process}
\end{figure}

In summary, the overall value function approximation procedure is illustrated in Fig. \ref{fig:appr_value_process}. Compare with the conventional optimal solution, the complexity of value function evaluation is dramatically reduced. In the above approximation approach, the computation complexity of value function calculation is $\mathcal{O}(M_R^{\epsilon}N_CM_F)$. In order to store these values, the required memory space is also $\mathcal{O}(M_R^{\epsilon}N_CM_F)$. On the other hand, the optimal solution of MDP suffers from the curse of dimensionality. Specifically, the computation complexity of the conventional value iteration algorithm is $\mathcal{O}(LN_C2^{M_FN_C})$, and the memory requirement is  $\mathcal{O}(L2^{M_FN_C})$.

\subsection{Bounds on Approximate Value Functions}\label{Sub:Sec:bounds}

As elaborated in the above two parts, the approximation of value function $ \widetilde{W}_k $ is made via two steps. Firstly, the random stage number $ N_R $ is transformed to a fixed stage number $ M_R^{\epsilon} $, and the value function $ \widetilde{W}_k $ is approximated by its lower-bound $ \mathcal{L}_k $. Then, $ \mathcal{L}_k $ is further approximated by $ J_k$, i.e  $$J_{k}(B_k)\triangleq \sum_{f=1}^{M_F} J_{k,f,0}({B}_{k})+\sum_{c=1}^{N_C}  \Big\{\sum_{f=1}^{M_F} J_{k,f,c}({B}_{k}) \Big \}\approx \widetilde{W}_k(B_k),\forall B_k.$$ From Lemma \ref{lem:reduce stage number}, it is straightforward that the approximation error of the first step is upper-bounded as follows.
\begin{equation}
\widetilde{W}_k ({B}_k) - \mathcal{L}_k ({B}_k) \leq \mathcal{U}_k ({B}_k) - \mathcal{L}_k ({B}_k) = \sum_{\tau=M_R^{\epsilon}+1}^{L}	\overline{g}_{max} \Pr (\tau \leq N_R) < \epsilon L  \overline{g}_{max}.
\end{equation}
The error upper-bound $ \epsilon L \overline{g}_{max} $ tends to $ 0 $ when $ M_R^{\epsilon} $ tends to $ L $. Since $ \beta \ll 1 $, $ \epsilon L $ can be arbitrarily small even when $ \beta L < M_R^{\epsilon} \ll L $. As $ \overline{g}_{max} $ is a constant, we can choose one $ M_R^{\epsilon} \ll L $ such that the the upper bound $  \epsilon L  \overline{g}_{max} $ of $|\widetilde{W}_k ({B}_k) - \mathcal{L}_k ({B}_k)|$ is negligible compared with $ \mathcal{L}_k ({B}_k)$. Thus, the first approximation step of the value function can be tight. In order to analyze the overall approximation error (i.e. $|\widetilde{W}_k ({B}_k) - {J}_k ({B}_k)|$), we first introduce the following bounds.

\begin{Lemma}[Upper-Bound on Value Function] \label{lem:upper-bound}
	The value function $ \widetilde{W}_{k}({B}_{k}) $ is upper-bounded as
	\begin{equation}
	\widetilde{W}_{k} ({B}_{k}) \leq \mathcal{U}_{k} ({B}_{k}) \leq J_{k}({B}_{k}) + \sum_{\tau=M_R^{\epsilon}+1}^{L}	\overline{g}_{max} \Pr (\tau \leq N_R)  \triangleq \widetilde{\mathcal{U}}_k({B}_{k}), \ \ \forall k.
	\end{equation}
\end{Lemma}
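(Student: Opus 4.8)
The plan is to prove the two inequalities $\widetilde{W}_k(B_k)\le \mathcal U_k(B_k)$ and $\mathcal U_k(B_k)\le J_k(B_k)+\sum_{\tau=M_R^\epsilon+1}^L \overline g_{max}\Pr(\tau\le N_R)$ separately. The first is exactly the upper-bound half of Lemma \ref{lem:reduce stage number}, so nothing new is needed there; I would simply cite it. All the work is in the second inequality, which is really a statement that replacing the optimal finite-horizon policy $\{\Omega_\tau\}$ over the truncated horizon $\{k,\dots,M_R^\epsilon\}$ by the specific suboptimal policy $\{\Omega_\tau^\dagger\}$ used to define the $J_{k,f,c}$'s, and then upper-bounding the per-file per-region costs of that policy by the closed-form quantities $J_{k,f,c}$, only increases the objective.

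First I would recall from Section \ref{sub:linear-app} that $\mathcal L_k(B_k)=\sum_{f=1}^{M_F}\sum_{c=0}^{N_C}\widetilde W_{k,f,c}(B_k)$, and that $\mathcal U_k(B_k)=\mathcal L_k^\dagger(B_k)+\sum_{\tau=M_R^\epsilon+1}^L\overline g_{max}\Pr(\tau\le N_R)$, where $\mathcal L_k^\dagger$ is the truncated-horizon cost of any policy — in particular, one may upper-bound $\mathcal U_k$ by evaluating it at the policy $\{\Omega_\tau^\dagger\}$. So it suffices to show, region by region, that the per-file per-region cost incurred by $\{\Omega_\tau^\dagger\}$ is at most $J_{k,f,c}(B_k)$. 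For $c=0$ this is immediate from the definition \eqref{eqn:J_{k,f,0}}: $J_{k,f,0}$ is itself the minimum of exactly that cost over feasible $(P_\tau,N_\tau)$, hence an upper bound once we accept the high-SINR closed form \eqref{eqn:app_J_{k,f,0}}. For $f\in\mathcal F_c^H$, I would argue that under $\{\Omega_\tau^\dagger\}$ a high-popularity file, once decoded at cache node $c$, is never evicted (it is never in any node's low-popularity set), so the relevant cost is captured by the restricted problem \eqref{eqn:w-lowerbound} defining $\widehat W_{k,f}(\mathbf b_f^c)$, and \eqref{eqn:approx-high} then gives $J_{k,f,c}$ as $\widehat W_{k,f}(\mathbf b_f^c)-J_{k,f,0}$ (or $0$ when already cached); the backward induction \eqref{eqn:v_default one fragment }–\eqref{eqn:Q} evaluates $\widehat W_{k,f}$ exactly under the natural greedy choice of receiving cache nodes, so it dominates the true cost. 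For $f\in\mathcal F_c^L$, I would use the eviction model: under $\{\Omega_\tau^\dagger\}$ a low-popularity file at node $c$ is kept until the cache fills with higher-popularity files, an event whose timing is governed by $\mathbb P_{k,n}^{f,c}(B_k)$, and \eqref{eqn:low-pop-app} is precisely $\sum_{\tau\ge k}\eta_{\tau,f,c}$ minus the expected savings after eviction — again an upper bound on the actual accrued cost.

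Summing these three bounds over $f$ and $c$ gives $\mathcal L_k^\dagger(B_k)\le \sum_{f,c}J_{k,f,c}(B_k)=J_k(B_k)$, and adding the tail term $\sum_{\tau=M_R^\epsilon+1}^L\overline g_{max}\Pr(\tau\le N_R)$ to both sides yields $\mathcal U_k(B_k)\le J_k(B_k)+\sum_{\tau=M_R^\epsilon+1}^L\overline g_{max}\Pr(\tau\le N_R)=\widetilde{\mathcal U}_k(B_k)$. Chaining with $\widetilde W_k\le\mathcal U_k$ completes the proof.

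The main obstacle I anticipate is making rigorous the claim that each $J_{k,f,c}$ genuinely upper-bounds the per-file per-region cost of $\{\Omega_\tau^\dagger\}$ rather than merely approximating it: the high-SINR approximations in \eqref{eqn:app_J_{k,f,0}}, \eqref{eqn:Q} and the $\eta_{k,f,c}$ expression, and the independence assumptions baked into $\mathbb P_{k,n}^{f,c}$ (treating successive requests' file indices as i.i.d.\ Bernoulli trials with parameter $\varphi$), must be handled carefully — either by invoking the high-SINR regime as the paper does elsewhere, or by showing $F(\theta,P_B)$ is itself an upper bound on the exact minimum single-stage cost (which follows from Jensen/convexity of the rate-inversion map and the peak-power cutoff structure of $F$ in \eqref{eqn:F}). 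I would expect the cleanest route is to state explicitly that $\{\Omega_\tau^\dagger\}$ is feasible for the truncated problem defining $\mathcal U_k$ and that $J_k$ is by construction its (closed-form upper bound on the) objective, deferring the per-stage cost-domination lemma to the same high-SINR justification already used for \eqref{eqn:app_J_{k,f,0}}.
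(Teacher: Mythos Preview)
Your overall skeleton is right: the first inequality is Lemma~\ref{lem:reduce stage number}, and since $\mathcal U_k=\mathcal L_k+\sum_{\tau>M_R^\epsilon}\overline g_{max}\Pr(\tau\le N_R)$, the second inequality reduces to $\mathcal L_k(B_k)\le J_k(B_k)$. That is also what the paper proves. The gap is in how you propose to get $\mathcal L_k\le J_k$.

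You try to bound each per-file per-region cost $\widetilde W_{k,f,c}(B_k)$ of the \emph{optimal} truncated policy $\{\Omega_\tau^\dagger\}$ by the corresponding $J_{k,f,c}(B_k)$ and then sum. This component-wise inequality need not hold. For instance, your $c=0$ case is backwards: $J_{k,f,0}$ in \eqref{eqn:J_{k,f,0}} is the \emph{minimum} cost for serving users in $\mathcal C_0$ subject only to \eqref{eqn:peak_power}--\eqref{constrain:user}, so every feasible policy---including $\Omega^\dagger$---spends at least that much there, i.e.\ $\widetilde W_{k,f,0}(B_k)\ge J_{k,f,0}$, not $\le$. Under $\Omega^\dagger$ the BS may deliberately overshoot the user's decoding threshold in $\mathcal C_0$ to reach cache nodes, making the $\mathcal C_0$ component strictly larger. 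Similar issues afflict the structural claims you make about $\Omega^\dagger$ for high- and low-popularity files (no eviction, eviction time governed by $\mathbb P^{f,c}_{k,n}$): these are plausible behaviours of a \emph{heuristic}, but you have not proved the optimal policy follows them, and the $J_{k,f,c}$'s were built under restricted side assumptions (e.g.\ $\widehat W_{k,f}(\mathbf b_f^c)$ fixes all other caches to already hold $f$) that do not match the state $B_k$.

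The paper's route avoids all of this by not reasoning about $\Omega^\dagger$ at all. It introduces a \emph{new} explicit heuristic policy $\{\Omega_\tau^\star\}$ (Policy~1 in Appendix~C): when the user is in $\mathcal C_c$, the requested file is low-popularity at $c$, and not cached there, the BS transmits only enough to serve the user; in all other cases it uses the policy that is optimal for the high-popularity sub-problem. Because $\{\Omega_\tau^\star\}$ is feasible for the truncated problem, $\mathcal L_k(B_k)\le \mathcal U_k^\star(B_k)$, where $\mathcal U_k^\star$ is its cost. The construction of Policy~1 is tailored so that its total cost decomposes exactly in the way the $J_{k,f,c}$'s were defined, giving $\mathcal U_k^\star(B_k)\le\sum_{f,c}J_{k,f,c}(B_k)=J_k(B_k)$. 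The key point is that one constructs a single feasible policy whose \emph{total} cost is dominated by $J_k$, rather than trying to dominate the optimal policy's cost component by component.
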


\begin{proof}
	Please refer to Appendix C.
\end{proof}

Let $ {B}_{\tau+1}^{\pi} \triangleq \pi({B}_{\tau}) $ be the CaSI of the $ (\tau+1) $-th stage after applying the operator $  \pi $ on the CaSI $ {B}_{\tau} $ ($ \forall \tau $). Specifically, the cache status at the $ c $-th cache node ($ \forall c $) in $ {B}_{\tau+1}^{\pi} $ is given by
\begin{itemize}
	\item When $\sum_f \mathcal{B}_{f,\tau}^c<M_c$, 
	\begin{eqnarray}
	\mathcal{B}_{f,\tau+1}^c =  \begin{cases}
	1, \ \ &f = f_{min}^c \\
	\mathcal{B}_{f,\tau}^c, \ \ &f \neq f_{min}^c
	\end{cases}  
	\end{eqnarray}
	where $f_{min}^c \triangleq \min \{f|\forall \mathcal{B}_{f,\tau}^c=0\}$ is the index of the most popular file which has not been cached at the $ c $-th cache node.
	\item When $\sum_f \mathcal{B}_{f,\tau}^c = M_c$ and $f_{min}^c < f_{max}^c $, where $f_{max}^c \triangleq \max \{f|\forall \mathcal{B}_{f,\tau}^c=1\}$ is the index of the least popular file cached at the $ c $-th cache node, 
	\begin{eqnarray}
	\mathcal{B}_{f,\tau+1}^c =  \begin{cases}
	1, \ \ &f = f_{min}^c \\
	\mathcal{B}_{f,\tau}^c, \ \ &f \notin \{f_{min}^c, f_{max}^c \} \\
	0, \ \ &f = f_{max}^c \\
	\end{cases}  
	\end{eqnarray}
	\item When $\sum_f \mathcal{B}_{f,\tau}^c = M_c$ and $f_{min}^c > f_{max}^c $, $\mathcal{B}_{f,\tau+1}^c =\mathcal{B}_{f,\tau}^c $, $\forall f$.
\end{itemize} 
Moreover, let $ \pi^k({B}_{\tau}) $ be the CaSI of the $ (\tau+k) $-th stage after applying the operator $  \pi $ on the CaSI of the $ \tau $-th stage $ {B}_{\tau} $ for $ k $ times ($ \forall \tau $). A lower-bound of value function is described below.

\begin{Lemma}[First Lower-Bound on Value Function]\label{lem:first lower-bound of value function} One lower-bound of $\widetilde{W}_{k}({B}_{k})$ is given by
		\begin{align}
		\widetilde{W}_{k} ({B}_{k}) \geq \mathcal{L}_k ({B}_{k}) \geq \sum_{\tau=k}^{M_R^{\epsilon}} \bar{g}_{min}(\pi^{\tau - k} ({B}_{k})) \Pr(\tau \leq N_R)\triangleq \widetilde{\mathcal{L}}_{k}^{1}({B}_{k}),
		\end{align}
		where 
		\begin{eqnarray}
	 \bar{g}_{min}({B}_{\tau})\triangleq&\min_{\Omega_\tau}&\mathbb{E}_{\mathcal{A}_{\tau},\zeta_{\tau}}\{ g_{\tau}({S}_{\tau},\Omega_\tau)|B_{\tau} \} \nonumber\\
		 &s.t.& \mbox{Constraint in }(\ref{eqn:peak_power}), (\ref{constrain:user}),\nonumber
		 \end{eqnarray}
		   denotes the minimum transmission cost for the BS to guarantee the successful reception of the requesting user.
\end{Lemma}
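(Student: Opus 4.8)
The plan is to prove the two inequalities in the statement separately. The first, $\widetilde{W}_{k}({B}_{k}) \geq \mathcal{L}_k ({B}_{k})$, is exactly the lower bound \eqref{eqn:lower-bound:reduce_system_stage} already established in Lemma \ref{lem:reduce stage number}, so all the new work lies in proving $\mathcal{L}_k ({B}_{k}) \geq \widetilde{\mathcal{L}}_{k}^{1}({B}_{k})$.

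First I would rewrite $\mathcal{L}_k$. By linearity of expectation, $\mathcal{L}_k(B_k) = \min_{\{\Omega_\tau\}} \sum_{\tau=k}^{M_R^{\epsilon}} \Pr(\tau\le N_R)\,\mathbb{E}_{\mathcal{A},\zeta}\{g_\tau(S_\tau,\Omega_\tau)\mid B_k\}$, the optimum of the truncated $M_R^{\epsilon}$-stage problem started from $B_k$. For any admissible policy and any stage $\tau$, conditioning on the CaSI $B_\tau$ reached under that policy and invoking the definition of $\bar{g}_{min}$ as the smallest expected one-stage cost consistent with \eqref{eqn:peak_power} and \eqref{constrain:user} gives $\mathbb{E}_{\mathcal{A}_\tau,\zeta_\tau}\{g_\tau(S_\tau,\Omega_\tau)\mid B_\tau\}\ge \bar{g}_{min}(B_\tau)$; hence $\mathcal{L}_k(B_k)\ge \min_{\{\Omega_\tau\}}\sum_{\tau=k}^{M_R^{\epsilon}}\Pr(\tau\le N_R)\,\mathbb{E}\{\bar{g}_{min}(B_\tau)\mid B_k\}$. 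Next I would expand $g_\tau$ to observe that $\bar{g}_{min}$ is affine and coordinatewise non-increasing in the quantities $V_c(B)\triangleq\sum_{f=1}^{M_F}p_f\mathcal{B}^c_{f}$, $c=1,\dots,N_C$ — a request falling in the coverage of a cache node that already holds it costs nothing, while otherwise the minimal BS multicast cost is file-independent (it depends only on $R_F$ and the channel statistics), so the $f$-sum collapses through $\sum_f p_f=1$. It therefore suffices to prove that under every admissible policy and every realization the reached CaSI obeys $V_c(B_{k+m})\le V_c(\pi^m(B_k))$ for all $c$, with $m=\tau-k$: then $\bar{g}_{min}(B_\tau)\ge \bar{g}_{min}(\pi^{\tau-k}(B_k))$ termwise, the outer minimization drops out, and weighting by $\Pr(\tau\le N_R)$ and summing over $\tau$ reproduces $\widetilde{\mathcal{L}}_{k}^{1}(B_k)$.

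The heart of the argument — and the step I expect to be hardest — is the combinatorial claim $V_c(B_{k+m})\le V_c(\pi^m(B_k))$, i.e.\ that the greedy operator $\pi$ overestimates the cacheable popularity mass at every node. Since $\bar{g}_{min}$, the cache-size constraint \eqref{constrain:cache}, the update constraints \eqref{constrain:cache_action1}--\eqref{constrain:cache_action2} and the operator $\pi$ all act cache node by cache node, I would fix one node $c$ and prove the stronger statement $n_t(B_{k+m}^c)\le n_t(\pi^m(B_k^c))$ for every threshold $t$, where $n_t(B^c)\triangleq|\{f\le t:\mathcal{B}^c_f=1\}|$; this implies the inequality on $V_c$ by summation by parts, since $p_1\ge p_2\ge\dots\ge 0$. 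The $n_t$-dominance I would prove by induction on $m$: in one admissible step only the currently requested file may have its indicator raised (constraint \eqref{constrain:cache_action1}) while all other indicators can only drop (constraint \eqref{constrain:cache_action2}), so $n_t$ increases by at most $\mathbf{I}[\mathcal{A}_\tau\le t]$; on the $\pi$-side one invokes the three cases of the definition of $\pi$ (node not full; node full with $f_{min}^c<f_{max}^c$; node full with $f_{min}^c>f_{max}^c$, which happens exactly when the node already stores $\{1,\dots,M_c\}$). The delicate point is the ``tight'' situation in which $n_t$ of the policy state already equals $n_t$ of the $\pi$-state: one must check that whenever the extra $+\mathbf{I}[\mathcal{A}_\tau\le t]$ would push it above, either the requested file is in fact already cached at node $c$ under the policy state (because $n_t$ attaining its maximum $\min(t,M_c)$ forces $\{1,\dots,t\}$ to be cached there), or the matching $\pi$-step also raises $n_t$ by one (because then $f_{min}^c\le t$). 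A short case split on the position of $t$ relative to $f_{min}^c$ and $f_{max}^c$ closes the induction.

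Finally, feeding the dominance back gives $\mathbb{E}\{\bar{g}_{min}(B_\tau)\mid B_k\}\ge \bar{g}_{min}(\pi^{\tau-k}(B_k))$ for every admissible policy and every $\tau\ge k$, whence $\mathcal{L}_k(B_k)\ge\sum_{\tau=k}^{M_R^{\epsilon}}\bar{g}_{min}(\pi^{\tau-k}(B_k))\Pr(\tau\le N_R)=\widetilde{\mathcal{L}}_k^{1}(B_k)$; combined with Lemma \ref{lem:reduce stage number} this is the assertion. I anticipate that everything except the combinatorial dominance lemma is routine bookkeeping, and that within that lemma the real care is needed precisely where $\pi$ is allowed to behave more aggressively than any feasible cache-update policy could (adding an arbitrary most-popular file each stage, rather than only the one that is requested).
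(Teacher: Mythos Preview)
Your proposal is correct and takes a genuinely different route from the paper's proof. The paper argues by backward induction on the stage index: it writes both $\mathcal{L}_{k-1}$ and $\widetilde{\mathcal{L}}_{k-1}^{1}$ via their one-step recursions and closes the induction using the value-function monotonicity $\mathcal{L}_k(\pi(B_{k-1}))\le \mathcal{L}_k(B_k)$ for every admissibly reachable $B_k$, together with the induction hypothesis $\widetilde{\mathcal{L}}_k^1(\pi(B_{k-1}))\le \mathcal{L}_k(\pi(B_{k-1}))$. You instead work forward, sample-path by sample-path: you first strip the per-stage cost down to $\bar{g}_{min}(B_\tau)$, identify $\bar{g}_{min}(B)=\mu_0+\sum_{c}\mu_c(1-V_c(B))$ as a function only of the cached popularity masses $V_c$, and then prove the purely combinatorial dominance $n_t(B_{k+m}^c)\le n_t((\pi^m(B_k))^c)$ at the level of cache states, which yields $V_c(B_{k+m})\le V_c(\pi^m(B_k))$ by Abel summation. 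Both proofs ultimately rest on the same fact---that one greedy $\pi$-step dominates any admissible cache update---but the paper invokes it at the value-function level (and, in its sketch, simply asserts $\mathcal{L}_k(\pi(B_{k-1}))\le \mathcal{L}_k(B_k)$ without justification), whereas you isolate and prove it as an explicit state-space lemma. Your route is more elementary and self-contained; the paper's is shorter on the page but leaves the crucial monotonicity step to the reader, and fleshing it out would essentially require your $n_t$-dominance argument anyway.
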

\begin{proof}
	Please refer to Appendix D.
\end{proof}

The above lower-bound may be loose when most of the files are of high popularity, as it underestimates the cost of file placement. By relaxing the cache size constraint in \eqref{constrain:cache}, we have another lower-bound as follows.

\begin{Lemma}[Second Lower-Bound on Value Function]\label{lem:second lower-bound of value function}
	$\widetilde{W}_{k}({B}_{k})$ is lower-bounded by
	\begin{align}
	\widetilde{W}_{k} ({B}_{k}) \geq \!\sum_{f=1}^{M_F}\sum_{n=1}^{L-k+1}\!\!\!\binom{ L-k+1}{n}\!(\beta p_f)^n (1-\beta p_f)^{ L-n-k+1}\!\bigg(\! \bar{g}_{min}^{f}({B}_{k})+(n-1)\mu_{0} \!\bigg)\triangleq \widetilde{\mathcal{L}}_{k}^{2}({B}_{k}),\nonumber
	\end{align}
	where $\mu_0$ is given by \eqref{eqn:app_J_{k,f,0}}, and $\bar{g}_{min}^{f}({B}_{k})\triangleq\mathbb{E}_{\zeta_k}  \Big[  F(\theta,P_B)\Big| {B}_{k}, \mathbf{l}_{k} \in \mathcal{C} - \mathcal{C}_{f,k} \Big]\Pr(\mathbf{l}_{k} \in \mathcal{C} - \mathcal{C}_{f,k}).$
\end{Lemma}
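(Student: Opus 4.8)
The plan is to follow the hint in the statement: first \emph{relax} the cache size constraint \eqref{constrain:cache}, which enlarges the feasible set of scheduling policies and hence can only decrease the optimum, so the resulting minimum cost remains a valid lower bound on $\widetilde{W}_k(B_k)$; then bound that relaxed cost file by file. For an arbitrary feasible policy I would write the remaining cost from stage $k$ as $\sum_{\tau\ge k}g_\tau=\sum_{f=1}^{M_F}\sum_{\tau\ge k:\,\mathcal A_\tau=f}g_\tau$ and condition on $N_f$, the number of times file $f$ is requested over the remaining horizon, which comprises $L-k+1$ frames in the statement's convention, each carrying a request for file $f$ independently with probability $\beta p_f$. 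Thus $N_f\sim\mathrm{Binomial}(L-k+1,\beta p_f)$, which is exactly what generates the weights $\binom{L-k+1}{n}(\beta p_f)^n(1-\beta p_f)^{L-n-k+1}$. It then suffices to show that, conditioned on $N_f=n\ge 1$, the expected cost attributed to file $f$ is at least $\bar{g}_{min}^{f}(B_k)+(n-1)\mu_0$; the $n=0$ contribution is zero.

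For that per-file estimate I would order the $n$ requests for file $f$ in time and split off the first one. \textbf{First request.} BS multicast is reactive, and by the cache-update constraints \eqref{constrain:cache_action1}--\eqref{constrain:cache_action2} file $f$ can be \emph{added} to a cache node only at a stage whose requested file is $f$, while it may only be removed otherwise; hence up to the first request for file $f$ the set of cache nodes holding $f$ is contained in $\{c:\mathcal B_{f,k}^c=1\}$, so the offloading region is a subset of $\mathcal C_{f,k}$. If the requesting user falls outside that region, the downlink decoding constraint \eqref{constrain:user} with the peak-power constraint \eqref{eqn:peak_power} forces $R_\tau\ge R_F$, so $g_\tau=P_\tau N_\tau+wN_\tau\ge F(\theta,P_B)$ by the definition of the minimum feasible per-stage cost $F$ in \eqref{eqn:F}. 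Averaging over the i.i.d.\ location and SCSI of that request---which are independent of the past and of the current cache state---the expected cost of the first file-$f$ request is at least $\mathbb E_{\zeta_k}\big[F(\theta,P_B)\,\mathbf I(\mathbf l_k\in\mathcal C-\mathcal C_{f,k})\big]=\bar{g}_{min}^{f}(B_k)$. \textbf{Remaining requests.} For any later request for file $f$, whenever the user lies in $\mathcal C_0=\mathcal C-\mathcal C^*$ it cannot be served by any cache node, so again $g_\tau\ge F(\theta,P_B)$; taking expectation over the fresh, history-independent location and SCSI gives expected cost at least $\Pr[\mathbf l\in\mathcal C_0]\,\mathbb E_\zeta[F(\theta,P_B)\mid\mathbf l\in\mathcal C_0]=\mu_0$. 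Summing the $n$ lower bounds yields $\bar{g}_{min}^{f}(B_k)+(n-1)\mu_0$. Assembling everything, $\widetilde{W}_k(B_k)\ge\sum_{f=1}^{M_F}\mathbb E_{N_f}\big[\mathbf I(N_f\ge 1)\big(\bar{g}_{min}^{f}(B_k)+(N_f-1)\mu_0\big)\big]$, and expanding the binomial PMF of $N_f$ turns the right-hand side into precisely $\widetilde{\mathcal L}_k^2(B_k)$.

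I expect the main obstacle to be the probabilistic bookkeeping rather than any single inequality. One must justify carefully that (i) conditioning on \emph{which} frames carry a request for file $f$ does not perturb the i.i.d.\ distributions of the associated user locations and channel statistics---true because the requested-file identity is drawn independently of location and SCSI; (ii) the offloading region at the first file-$f$ request is contained in $\mathcal C_{f,k}$, which rests squarely on the reactive-multicast structure and the exact form of \eqref{constrain:cache_action1}--\eqref{constrain:cache_action2}; and (iii) the per-request lower bounds, each obtained by conditioning on the history up to that request, combine via the tower property and linearity of expectation into a bound on the unconditional expected cost and hence on the minimum over policies. It is also worth remarking that this term-by-term argument in fact bounds the original (unrelaxed) problem directly; the relaxation is only a conceptual device explaining why the bound improves on Lemma~\ref{lem:first lower-bound of value function} when most files are popular, since it optimistically charges only $\mu_0$---rather than a full $\bar{g}_{min}$-type term---to each repeated request for a given file.
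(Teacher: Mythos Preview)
Your proposal is correct and follows essentially the same approach as the paper. The paper's own proof is extremely terse: it simply relaxes the cache-size constraint \eqref{constrain:cache} so that files are never replaced, notes that the problem then decouples across files, and defers the remaining per-file argument to Lemma~5 of \cite{WANG2017}. Your write-up supplies exactly that deferred argument in a self-contained way---decomposing by file, bounding the first file-$f$ request by $\bar g_{\min}^{f}(B_k)$ via the monotonicity of the file-$f$ cache set before its first request, bounding each later request by $\mu_0$ since users in $\mathcal C_0$ are never offloaded, and then averaging over the binomial count of file-$f$ requests. Your additional remark that the relaxation is only a conceptual device---because the per-request lower bounds already hold under the original constraints (removals can only shrink the offloading region, which only raises cost)---is a valid sharpening not made explicit in the paper.
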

\begin{proof}
	This lower-bound is obtained by assuming that there is sufficient memory space in each cache node, so that all the files can be stored without being replaced. Hence, the conclusion of this lemma directly follows Lemma 5 in \cite{WANG2017}.
\end{proof}

Hence, a tighter lower-bound of $\widetilde{W}_{k}({B}_{k})$ is given by 
\begin{align}\label{eqn:tighter_lower_bound}
\widetilde{\mathcal{L}}_{k}({B}_{k})\triangleq\max\bigg(\widetilde{\mathcal{L}}_{k}^{1}({B}_{k}),\widetilde{\mathcal{L}}_{k}^{2}({B}_{k})\bigg).
\end{align}
Note that $ \{J_k | \forall k \} $ are the proposed approximation of value functions, and the upper-bounds of their approximation errors can be calculated via
$$|J_{k}({B}_{k}) - \widetilde{W}_{k}({B}_{k}) |
\leq \widetilde{\mathcal{U}}_{k} ({B}_{k}) - \widetilde{\mathcal{L}}_{k} ({B}_{k}), \forall k, {B}_{k},$$ 
where both $ \widetilde{\mathcal{U}}_{k} ({B}_{k}) $ and $ \widetilde{\mathcal{L}}_{k} ({B}_{k}) $ can be calculated analytically. According to the definition of the value function, the optimal average cost of the whole lifetime with initial CaSI $ {B}_1 $ is $ \widetilde{W}_1({B}_1) $, which can be bounded as
	$\widetilde{\mathcal{L}}_{1}({B}_{1})  \leq \widetilde{W}_1({B}_1) \leq \widetilde{\mathcal{U}}_{1} ({B}_{1})$.

\subsection{Scheduling Policy with Approximate Value Functions} \label{sub:online}
In this part, we optimize the multicast power $ P_k $, symbol number $ N_k $ and the decisions on cache update $ \{\Delta \mathcal B^c_{f,k}|\forall f,c\} $ for the $ k $-th ($ \forall k $) multicast, when the $ k $-th file request cannot be served by any cache node (i.e., $ \mathbf l_k \notin \mathcal C_{\mathcal A_k, k} $). Given $ P_k $, $ N_k $ and the system state, the cache nodes which can successfully decode the $ \mathcal A_k $-th file are determined, so are $ \{\Delta \mathcal B^c_{f,k}|\forall f,c\} $. Hence, with the approximate value functions $ \{J_{k,f,c}| \forall k,f,c\} $,  the optimization problem in (\ref{eqn:policy}) can be rewritten as follows.
\begin{Problem}[Scheduling with Approximate Value Functions]\label{prob:online}
\begin{eqnarray}
Q^{*}_k \triangleq &\min\limits_{P_k, N_k} & \! \bigg\{ \!{g}_{k}\bigg[S_{k}, \Omega_{k}(S_{k})\bigg]\!\Pr(N_R\geq k) 
\!+\! J_{k+1}(B_{k+1}) \bigg \} \nonumber\\
&s.t.&\mbox{Constraints in }  (\ref{eqn:peak_power}), (\ref{constrain:user})-(\ref{constrain:cache_action2}),
\end{eqnarray}
where $ B_{k+1} $ represents the CaSI after the $ k $-th multicast. Specifically, in the cache update from $ B_k $ to $ B_{k+1} $, a cache node stores the $ \mathcal A_k $-th file when (1) it has decoded this file, and (2)  there is spare memory or cached files with lower popularity (than $ \mathcal{A}_{k} $).
\end{Problem}
Because of the factor $ J_{k+1}(B_{k+1}) $ in the objective, Problem \ref{prob:online} is a mixed continuous and discrete optimization problem: the cache nodes for receiving the $ \mathcal A_{k} $-th file should be selected;  for given the receiving cache nodes, the transmission power and symbol number should be optimized accordingly. Its optimal solution is summarized below.

\begin{Lemma}[Optimal Solution of Problem \ref{prob:online}]\label{lem:online-scheduling}
	Let $ d_1,d_2,..., d_{N_C}$ be the indexes of the cache nodes, whose large-scale fading coefficients satisfy $ \rho^{d_1}\eta_k^{d_1} \leq  \rho^{d_2}\eta_k^{d_2} \leq ... \leq \rho^{d_{N_C}}\eta_k^{d_{N_C}}$ and $ \rho^{d_m}\eta_k^{d_m}\!\! \leq \rho_k \eta_k\leq \rho^{d_{m+1}}\eta_k^{d_{m+1}} $. Define  $J_{k}^c(B_k)\triangleq \sum_{f=1}^{M_F} J_{k,f,c}({B}_{k}) $,
	\begin{eqnarray}
	Q_{d_i}^{*}&
	\triangleq&\min \bigg\{(P_{k}N_{k} + w N_{k})\Pr(N_R \geq k) + \sum\limits_{c=d_i}^{d_{N_C}}   J_{k+1}^c(  {B}_{k+1} )  + \sum\limits_{c=d_1}^{d_{i-1}} J_{k+1}^c(  {B}_{k} )  \bigg\}\nonumber\\
	&s.t.& \mbox{Constraints in } (\ref{eqn:peak_power}), (\ref{constrain:user}) - (\ref{constrain:cache_action2});\quad R^{d_i}_{k} \geq R_F \nonumber
	\end{eqnarray}
	The minimized objective of Problem \ref{prob:online} is given by 
	\begin{equation}
	Q_k^{*} = \min_{i=1,2,...,m+1} Q_{d_i}^{*}. \label{eqn:opt-qk}
	\end{equation}
	Moreover, $[P^{d^{*}}_{k},N^{d^{*}}_{k}]=\Big[\min(\frac{w} { \mathbb{W}(\frac{2^{\theta^{d^{*}}}w}{e}) },P_B),\max(\frac{R_F\ln(2)}{\alpha[\mathbb{W}(\frac{2^{\theta^{d^{*}}}w}{e})+1]},\frac{R_F}{\alpha[\theta^{d_{*}}+\log_2(P_B)]})\Big],$ where $\\
	d^{*}=\arg \min\limits_{d_i} Q_{d_i}^* $ and $\theta^{d^{*}} =	\mathbb{E}_{\mathbf{h}^{d^{*}}_{k}} \left[ \log_2 \left(  \frac{||\mathbf{h}_{k}^{d^{*}}||^2}{N_T \sigma^2_z} \right) \right] $, is the asymptotically optimal solution of Problem \ref{prob:online} in high SINR region.
\end{Lemma}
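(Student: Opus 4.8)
The plan is to convert Problem~\ref{prob:online} into a family of $m+1$ elementary single-receiver power/symbol-allocation problems, one for each choice of the ``marginal'' receiver whose successful decoding is enforced, to solve each in closed form through the expression behind \eqref{eqn:F}, and then to minimize over the $m+1$ choices.

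First I would simplify the objective. Since the $k$-th request cannot be served by any cache node, $g_k(S_k,\Omega_k(S_k))=P_kN_k+wN_k$. Decompose $J_{k+1}(B_{k+1})=\sum_f J_{k+1,f,0}(B_{k+1})+\sum_{c=1}^{N_C}J_{k+1}^c(B_{k+1})$; by \eqref{eqn:app_J_{k,f,0}} the term $J_{k+1,f,0}$ carries no CaSI dependence, so $\sum_f J_{k+1,f,0}$ is a constant that merely shifts the objective. Moreover, once $(P_k,N_k)$ is fixed, the cache-update rule in Problem~\ref{prob:online} makes $B_{k+1}$ deterministic --- a cache node's CaSI changes only when that node decodes the $\mathcal A_k$-th file --- so the objective depends on $(P_k,N_k)$ only through the transmission cost $(P_kN_k+wN_k)\Pr(N_R\ge k)$ and through the set of cache nodes that decode.

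Second, I would exploit the monotone ``suffix'' structure of the decoding set. For fixed $(P_k,N_k)$, both $R_k$ in \eqref{eqn:dl-rate} and each $R_k^c$ are increasing in the corresponding large-scale fading coefficient, because the channel-norm term scales with it; hence, under the ordering $\rho^{d_1}\eta_k^{d_1}\le\cdots\le\rho^{d_{N_C}}\eta_k^{d_{N_C}}$, the decoding cache nodes always form a set $\{d_i,\dots,d_{N_C}\}$. The downlink decoding constraint \eqref{constrain:user} forces $R_k\ge R_F$, and since $\rho^{d_j}\eta_k^{d_j}\ge\rho_k\eta_k$ for $j\ge m+1$, this automatically makes $d_{m+1},\dots,d_{N_C}$ decode, so $i$ ranges over $\{1,\dots,m+1\}$ and these $m+1$ configurations partition the feasible set of Problem~\ref{prob:online}. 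For a fixed $i$ the decoding set, hence $B_{k+1}$, hence the quantity $\sum_{c=d_i}^{d_{N_C}}J_{k+1}^c(B_{k+1})+\sum_{c=d_1}^{d_{i-1}}J_{k+1}^c(B_k)$, is determined, and what is left is to minimize $(P_kN_k+wN_k)\Pr(N_R\ge k)$ subject to $P_k\le P_B$ and a single decoding constraint: $R_k^{d_i}\ge R_F$ for $i\le m$ (which implies $R_k\ge R_F$ since $\rho^{d_i}\eta_k^{d_i}\le\rho_k\eta_k$), and for $i=m+1$ the constraint $R_k\ge R_F$ (the nominal $R_k^{d_{m+1}}\ge R_F$ being then redundant). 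As the cost is strictly increasing in $N_k$, the marginal constraint is active at the optimum, which both confirms that the decoding set is exactly $\{d_i,\dots,d_{N_C}\}$ and reduces the problem to the single-receiver program whose asymptotically optimal high-SINR solution is recorded in \eqref{eqn:F}: the optimal value is $F(\theta',P_B)\Pr(N_R\ge k)$ with $\theta'$ the log-channel-gain of the marginal receiver, attained at power $\min\!\big(w/\mathbb{W}(2^{\theta'}w/e),\,P_B\big)$ --- the Lambert-$\mathbb{W}$ stationarity point truncated by the peak-power constraint --- and the corresponding symbol count, which together give the stated $\min$/$\max$ expression. Re-adding the determined $J_{k+1}$ terms makes the configuration-$i$ optimum equal to $Q_{d_i}^*$, whence $Q_k^*=\min_{i=1,\dots,m+1}Q_{d_i}^*$ and the optimizer uses $\theta^{d^*}$ with $d^*=\arg\min_i Q_{d_i}^*$.

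I expect the delicate part to be the boundary bookkeeping rather than any analytic difficulty: one must check that the $m+1$ configurations genuinely partition the feasible set; that restricting a configuration to ``decoding set exactly $\{d_i,\dots,d_{N_C}\}$'' is no costlier than the ``$\supseteq$'' relaxation implicit in $Q_{d_i}^*$ (which holds precisely because the marginal decoding constraint is tight at the optimum); that the $i=m+1$ case, where the effective marginal receiver is the requesting user rather than a cache node, is handled consistently; and that the CaSI-independent constant $\sum_f J_{k+1,f,0}$ --- which shifts $\min_i Q_{d_i}^*$ uniformly and is absorbed into the compact statement --- is accounted for. The underlying high-SINR single-receiver optimization, i.e.\ the Lambert-$\mathbb{W}$ stationarity computation together with the peak-power truncation, is the only computational step, and since it is already carried out in deriving \eqref{eqn:F}, I would cite it rather than redo it.
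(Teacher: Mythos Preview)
Your proposal is correct and follows essentially the same approach as the paper's proof, which is only a two-sentence sketch: it notes that $Q_{d_i}^{*}$ is the optimal value when the $d_i$-th cache node and all cache nodes with better downlink channels decode the multicasted file, and then remarks that the closed-form $(P_k^{d^*},N_k^{d^*})$ expressions follow from the same high-SINR computation used to derive \eqref{eqn:app_J_{k,f,0}}. Your write-up simply spells out the details that the paper leaves implicit --- the suffix structure of the decoding set, the partition into $m{+}1$ configurations, the tightness of the marginal decoding constraint, and the handling of the $i=m{+}1$ boundary case --- and your observation about the CaSI-independent $\sum_f J_{k+1,f,0}$ constant is a careful bookkeeping point that the paper's statement of $Q_{d_i}^{*}$ silently absorbs.
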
	

\begin{proof}
	The conclusion of (\ref{eqn:opt-qk}) is straightforward by noticing that $ Q_{d_i}^{*} $ is the optimal value when the $ d_i $-th cache node and the cache nodes with better downlink channel than the $ d_i $-th one can decode the multicasted file. Moreover, the expressions of $ P^{d^{*}}_{k} $ and $N^{d^{*}}_{k} $ can be derived similarly to (\ref{eqn:app_J_{k,f,0}}).
\end{proof}

It is clear from the above lemma that, the asymptotically optimal solution of Problem \ref{prob:online} can be obtained via a one-dimensional search: calculate $ Q_{d_i}^{*} $ for the $ d_i $-th cache node, and find a cache node with minimum $ Q_{d_i}^{*} $. The computation complexity is low.

\section{Reinforcement Learning Algorithm for Unknown System Statistics}\label{sec:learning}

In Section \ref{sub:linear-app}, the approximate per-file per-region value functions $J_{k,f,c}({B}_{k})$ ($\forall k,f,c$) are evaluated analytically by assuming the knowledge on the distribution of the requesting users $ \mathcal{D} $ and the file popularity $\{p_f|\forall f\}$. In practice, however, the former distribution may not be available at the BS, and the initial estimation of file popularity may not be accurate. In this section, a reinforcement learning algorithm is proposed to estimate the approximate value function for the above practical scenario in an online way.

For elaboration convenience, $\forall c=0,1,2...,N_C$, we define 
$$
 \mu_c \triangleq \min_{\{P_{k},N_{k}\}}  \Pr[\mathbf{l}_{ k } \in \mathcal{C}_{c}] \times\mathbb{E}_{\zeta_k}  \Big[  P_{k}N_{k}+wN_{k} \Big|\mathbf{l}_{k}\in \mathcal{C}_{c} \Big],$$ where the constraints in \eqref{eqn:peak_power} and \eqref{constrain:user} should be satisfied.
Hence, we have $$ \mu_c\approx\Pr[\mathbf{l}_{ k } \in \mathcal{C}_{c}] \times\mathbb{E}_{\zeta_k}  \Big[  F(\theta,P_B)\Big|\mathbf{l}_{k}\in \mathcal{C}_{c} \Big]$$ in the high SINR region. It can be observed from (\ref{eqn:app_J_{k,f,0}}), (\ref{eqn:approx-high}) and (\ref{eqn:low-pop-app}) that, the approximate  per-file per-region value functions  depend on $ \{ \mu_c | \forall c=0,1,2,...,N_C\}$, $\{\upsilon_{k,f,c} |\forall k,f,c\} $ (defined in (\ref{eqn:v_{k,f,c}})) and $\{p_f | \forall f \}$. Instead of the learning of the value functions (e.g., the Q-learning method in \cite{Bertsekas2000Dynamic}) directly, we propose to learn $ \{ \mu_c | \forall c=0,1,2,...,N_C\}$, $\{\upsilon_{k,f,c} |\forall k,f,c\} $ and $\{p_f | \forall f \}$, and calculate the per-file per-region value functions in the following algorithm. To facilitate more efficient learning on files' popularity, we extend the data collection scope from single cell to a network. Specifically, it is assumed that the file popularity is homogeneous within $ N_{cell} $ cells, and each BS can collect the history data of file requests from all these BSs.

\begin{Algorithm}[Reinforcement Learning for Per-File Per-Region Value Functions]\label{alg:learning}
	\ \
	\begin{itemize}
		\item {\bf Step 1}: Let $ t=0 $. Initialize the values of $J_{k,f,c}(B_k), p_f,\mu_c,\upsilon_{k,f,c}$ ($\forall k,f,c$) according to certain assumptions on user arrival and popularity distributions, denoted as $J^0_{k,f,c}(B_k), p_f^0,\mu_c^0,\\\upsilon_{k,f,c}^0$ respectively. 
		
		\item {\bf Step 2}: Let $ t= t+ 1 $ on each new file request arrival.  Update $p_f^t,\mu_c^t,\upsilon_{k,f,c}^t(\forall k,f,c)$ as follows.
		\begin{itemize}
			\item $p_f^t=\frac{t}{t+1}p_f^{t-1}+\frac{1}{t+1}\frac{N_{R,f}^{t}}{\beta N_{cell}T_F^t}$, where $N_{R,f}^{t}$ is the total number of requests on the $ f $-th file in the $N_{cell}$ cells during $T_F^t$ frames.
		
		\item $ \mu_c^t=\frac{t}{t+1}\mu_{c}^{t-1}+\frac{1}{t+1}\mathbf{I}[\mathbf{l}_{m} \in \mathcal{C}_{c}] \times F(\theta_m,P_B) $, where $\mathbf{l}_{m}$ is the location of the requesting user, and $\theta_m =	\mathbb{E}_{\mathbf{h}_{m}} \left[ \log_2 \left(  \frac{||\mathbf{h}_{m}||^2}{N_T \sigma^2_z}\right) \right] $.
		
		\item Moreover,
		\begin{align}
				\upsilon_{k,f,c}^{t}&=\frac{t}{t+1}\upsilon_{k,f,c}^{t-1}+\frac{1}{t+1}\bigg\{\mathbf{I}[\mathbf{l}_{m} \in \mathcal{C}_{f,m}( \mathbf{b}_{f}^c )]\times \widehat{W}_{k+1,f}^{t}( \mathbf{b}_{f}^c )  \nonumber\\ 
				&+\mathbf{I}[\mathbf{l}_{m} \notin \mathcal{C}_{f,m}( \mathbf{b}_{f}^c )]\times \bigg[ \overline{G}_{k,f}^1 \mathbf{I}[R_m \leq R_m^c] +\min \{\overline{G}_{k,f}^2, \overline{G}_{k,f}^3\} \mathbf{I}[R_m > R_m^c] \bigg] \bigg\}, \nonumber
				\end{align}
			
				where $\overline{G}_{k,f}^1,\overline{G}_{k,f}^2,\overline{G}_{k,f}^3$ are given by 
			(\ref{eqn:Q}).

\end{itemize}

	\item {\bf Step 3}: Calculate $J_{k,f,c}^t({B}_{k})$ ($\forall k,f,c$) according to (\ref{eqn:app_J_{k,f,0}}), (\ref{eqn:approx-high}) and (\ref{eqn:low-pop-app}), respectively.
	
	\item {\bf Step 4}: If $\max_{f,c}|J_{k,f,c}^{t}({B}_{k})-J_{k,f,c}^{t-1}({B}_{k})| $ is smaller than one threshold, terminate. Otherwise, go to Step 2.

	\end{itemize}

\end{Algorithm}

	In the above algorithm, the unbiased estimations of $p_f,\mu_c$ and $\upsilon_{k,f,c}(\forall k,f,c)$ are utilized to update the learning results. Hence, the learning procedure always converges, and the mean squared errors of the estimated $p_f^t,\mu_c^t$ and $\upsilon_{k,f,c}^t(\forall k,f,c)$ decrease with the order of $\mathcal{O}(1/t)$.

\section{Simulation Results}\label{sec:sim}

\begin{figure*}
	\centering
	\begin{tabular}{ccc}
		\includegraphics[width=3.4in,height=2.5in]{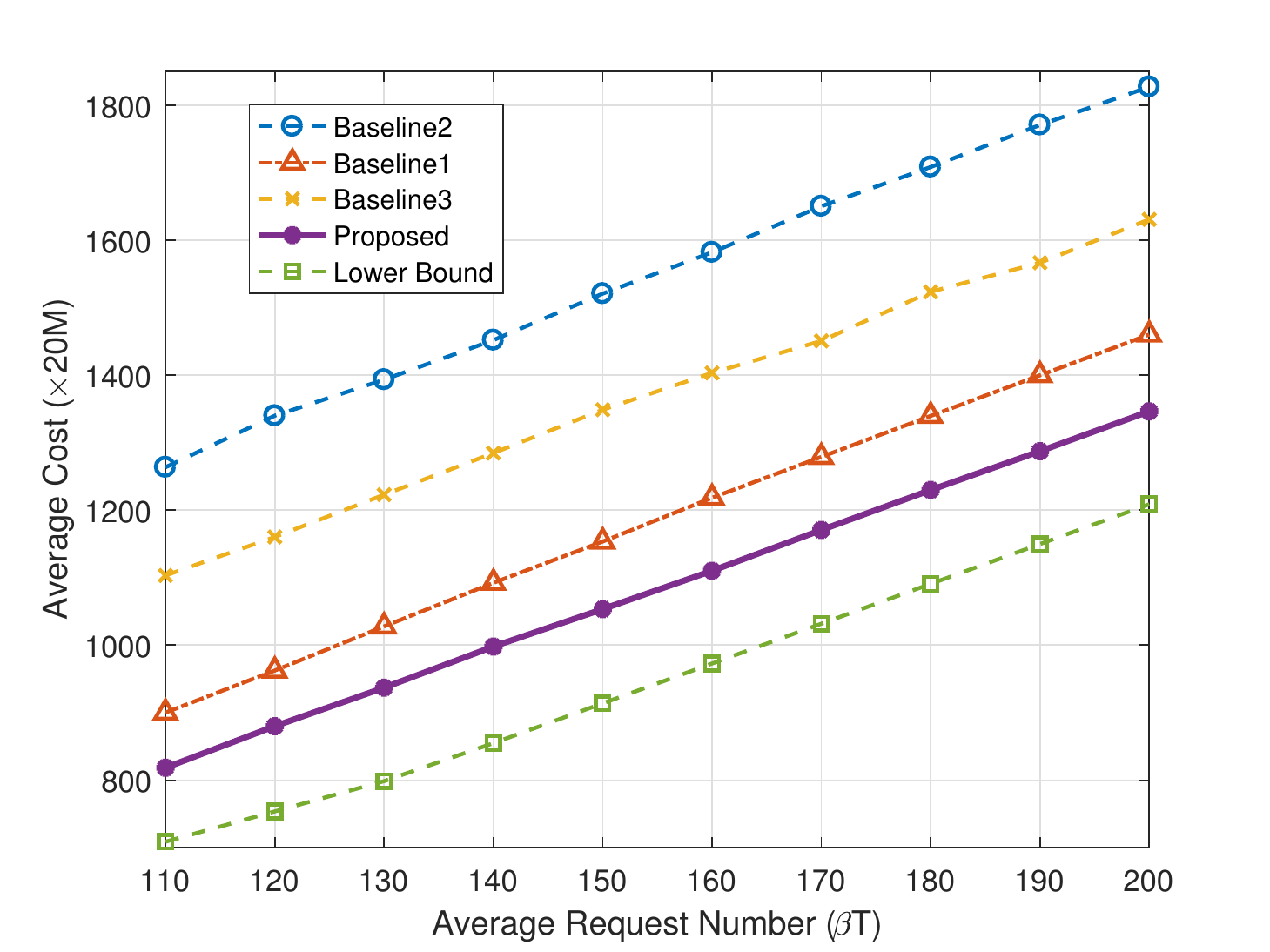}  & 
		\includegraphics[width=3.4in,height=2.5in]{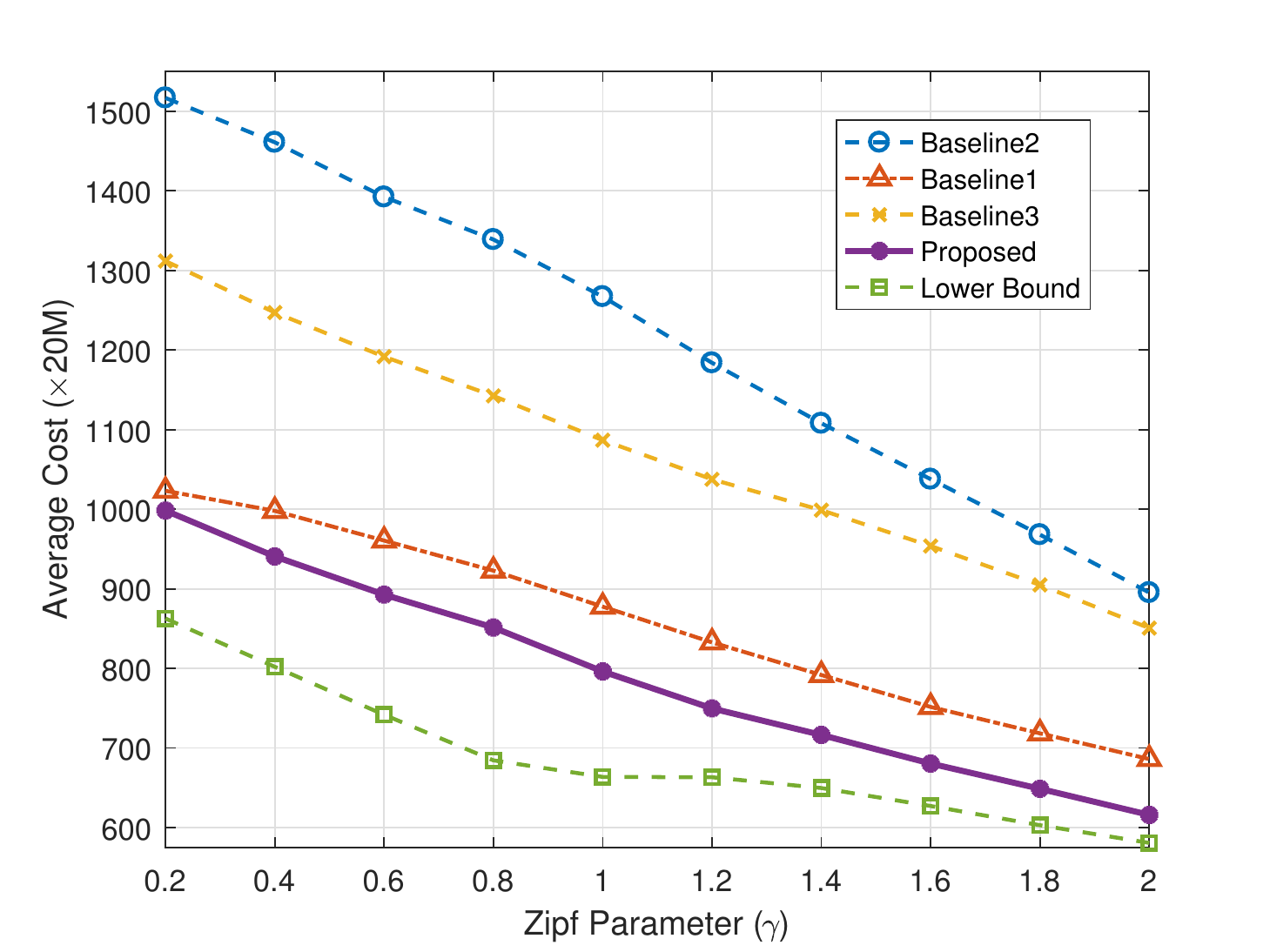}  &  \\ 
		(a) $\gamma=1.2$, $\frac{M_c}{M_F}=0.6 (\forall c)$ & (b) $\frac{M_c}{M_F}=0.6 (\forall c)$, $\beta L=100$&\\	
	\end{tabular}
	\caption{The average total cost versus the expectation of request times and Zipf parameter, where the geometrical distribution of requesting users is uniform, and the file popularity $ \{p_f| \forall f\} $ is known to the BS.}
	\label{Fig:online_sim}
	\vspace{-0.5em}	
\end{figure*}

\begin{figure}[tb]
	\centering
	\begin{tabular}{ccc}
			\includegraphics[width=3.4in,height=2.5in]{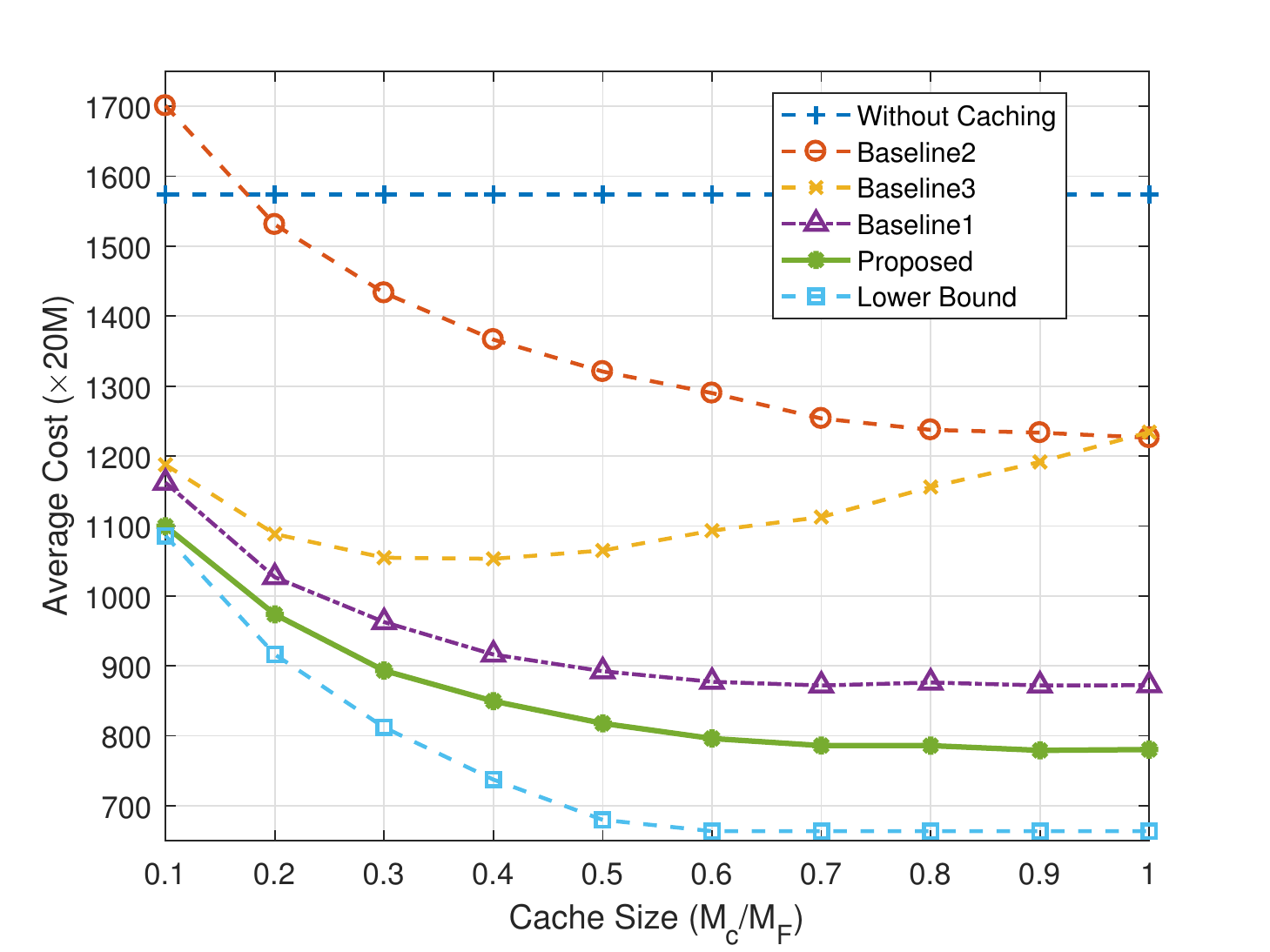} &
			\includegraphics[width=3.4in,height=2.5in]{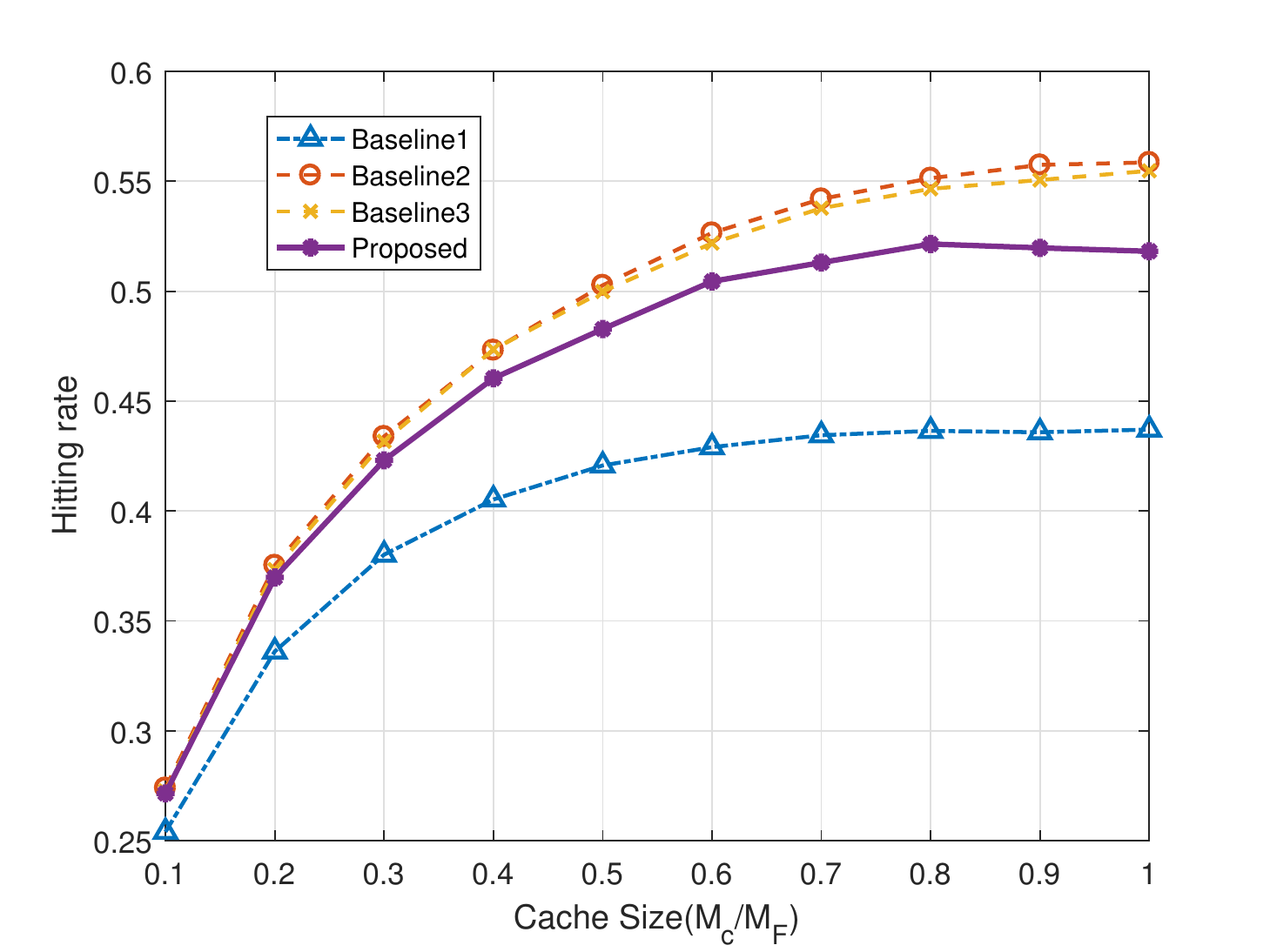}  &  \\ 
				(a) Average total cost versus $\frac{M_c}{M_F}$& (b) Hitting rate versus $\frac{M_c}{M_F}$ &\\
	\end{tabular}

	\caption{The average total cost and hitting rate versus the cache size ($\frac{M_c}{M_F}$), where $\gamma=1$, $\beta L=100$, the distribution of requesting users is uniform, and the file popularity $ \{p_f| \forall f\} $ is known to the BS.}
	\label{fig:cache_size}
\end{figure}

%\begin{figure}
%	\centering
%\includegraphics[width=4in,height=2in]{Sim_CDF.eps}
%\caption{{\revisesec Cumulative distribution function (CDF) of th number of transmission symbols per file request when $\gamma=1$, $\beta L=100$, $\frac{M_c}{M_F}=0.6$ ($\forall c$), the distribution of requesting users is uniform, and the file popularity $ \{p_f| \forall f\} $ is known to the BS.}}
%\label{fig:CDF}
%\end{figure}

\begin{figure}[tb]
	\centering
	\includegraphics[width=5in,height=2.5in]{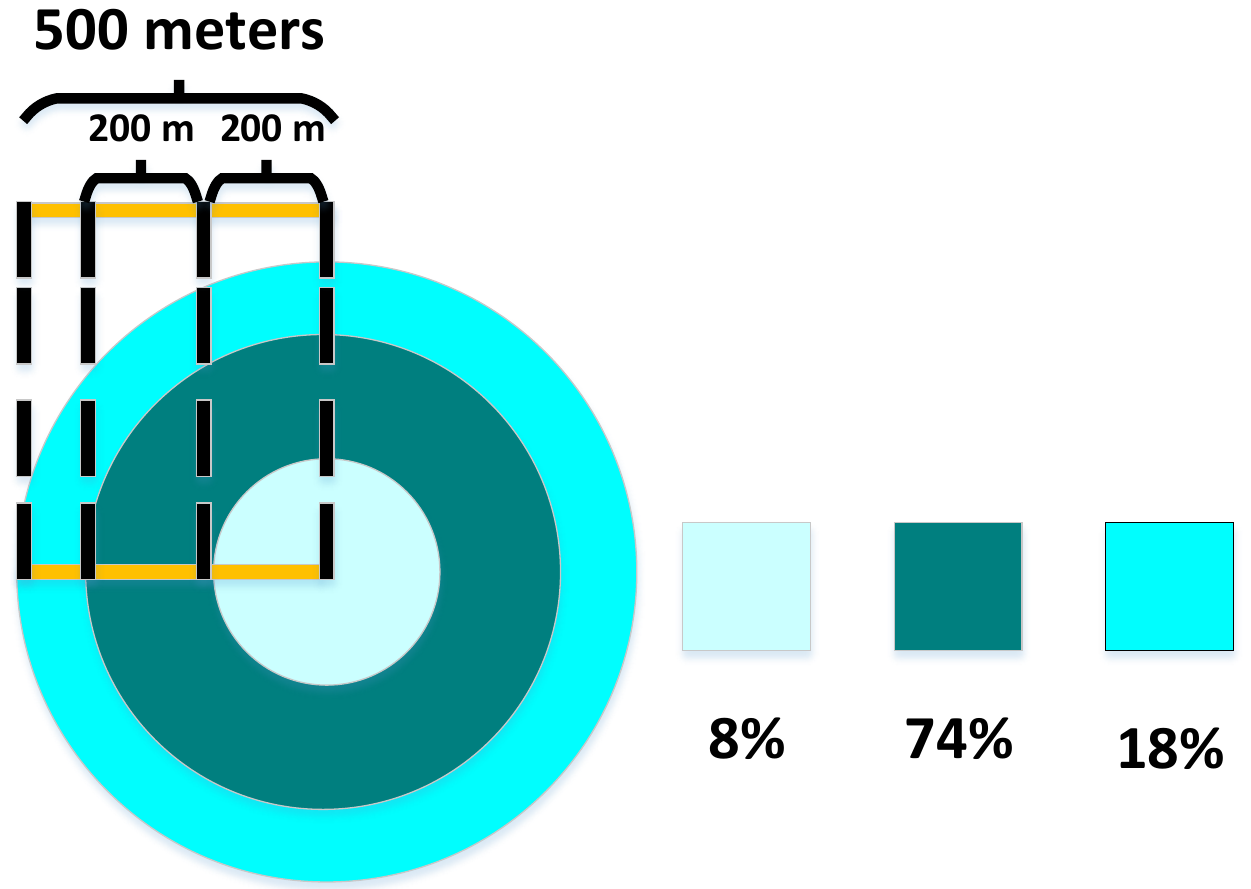}
	\caption{Illustration of a non-uniform geometrical distribution of the requesting users $ \mathcal{D} $, where whole cell area is divided into three regions, the users' distribution in each region is uniform, and the probabilities one requesting user falls into these three regions are 8$\%$, 74$\%$, and 18$\%$ respectively.}
	\label{fig:setting}
\end{figure}

\begin{figure}[tb]
	\centering
	\includegraphics[width=5in,height=2.5in]{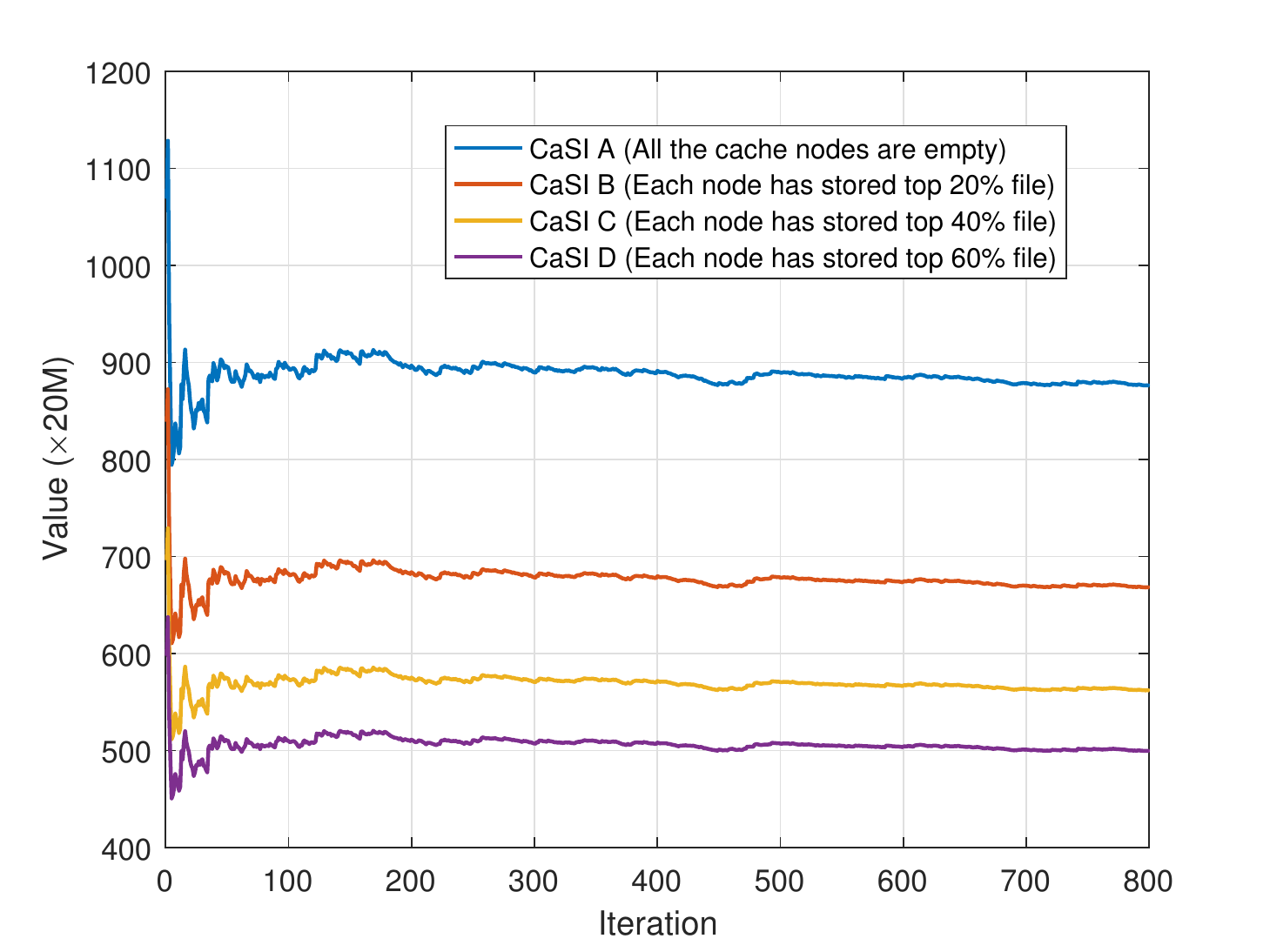}
	\caption{Illustration of the converge of approximated value function via reinforcement learning approach (Algorithm \ref{alg:learning}), when $k=1$, $\gamma=1.3$, $\beta L=100$ and $\frac{M_c}{M_F}=0.6$ ($\forall c$).}
	\label{fig:Converge}
\end{figure}

\begin{figure}[tb]
	\centering
	\includegraphics[width=5in,height=2.5in]{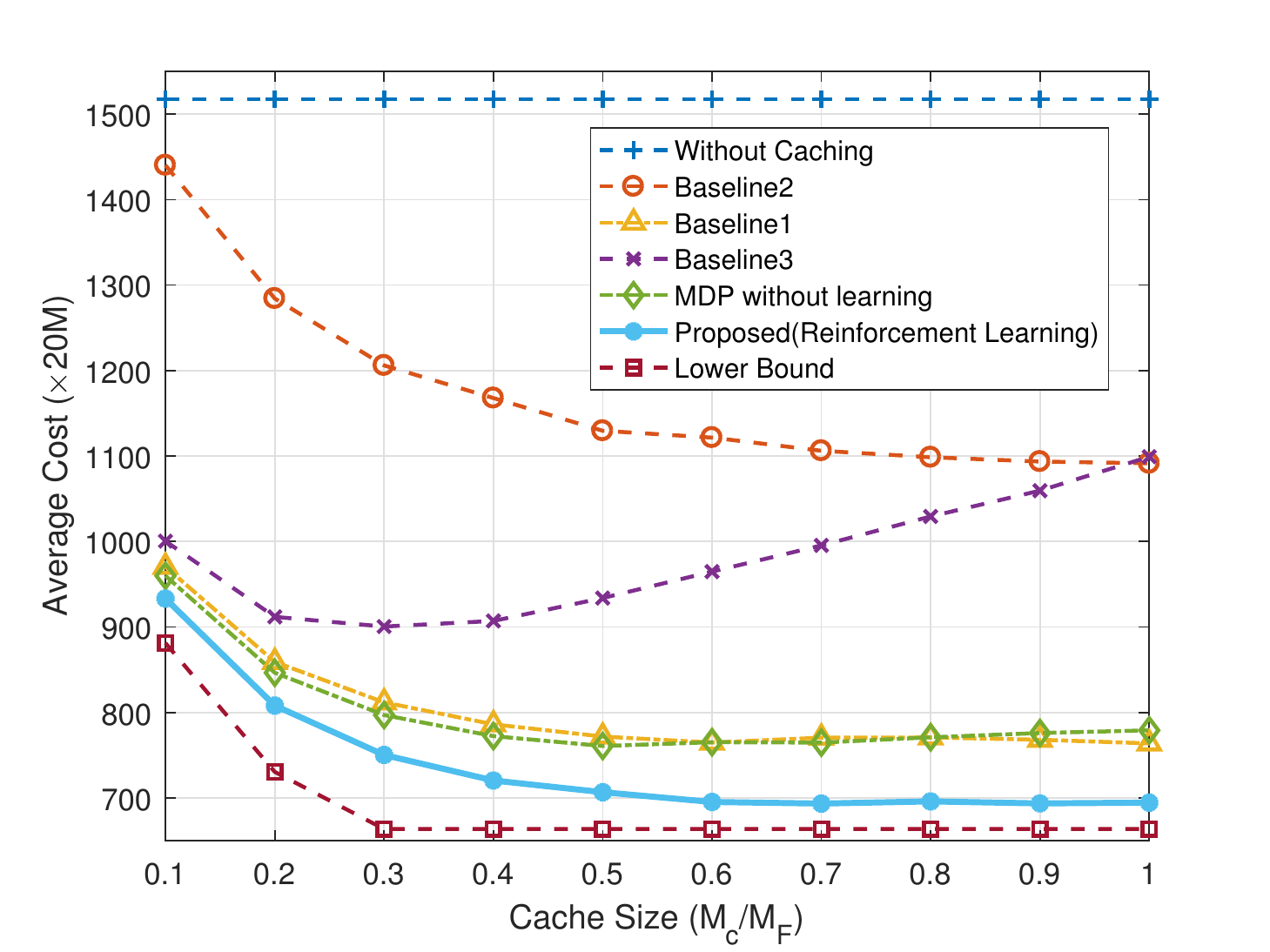}
	\caption{Illustration of the performance of approximated value function via reinforcement learning approach (Algorithm \ref{alg:learning}), when $\gamma=1.3$ and $\beta L=100$.}
	\label{fig:learning}
\end{figure}

In the simulation, the radius of the cell is $ 500 $ meters, $ 21 $ cache nodes are deployed in the cell-edge region, each with a service radius of $ 90 $ meters. The number of antennas at the BS is $8$. The downlink pathloss exponent is $3.76$, and the standard deviation of the shadowing effect is $10dB$. Each file consists of $14M$ information bits, and the transmission bandwidth is $ 20 $MHz. The power constraint
	at the base station is $P_B = 47 dBm$. The lifetime and the frame duration are 24 hours and 10 milliseconds respectively, and the number of frames in the lifetime $ L= 100 \times 60 \times 60 \times 24 \approx 8.6 \times 10^6$. It is assumed that $\{p_f| \forall f\}$ follows a Zipf distribution  with skewness factor $\gamma$ \cite{Web_cache}. The following three baseline schemes are compared with the proposed scheduling scheme.
\begin{Baseline}
The BS only ensures the file delivery to the requesting user in each transmission. The cache nodes with better channel conditions to the BS can also decode the file. The decoded file will be stored at the cache node if there is spare memory or files with lower popularity at the cache node.
\end{Baseline}
\begin{Baseline}The BS ensures that all the cache nodes can decode each file in its first transmission. Files with lowest popularity will be replaced at cache nodes if the caches are fully occupied. 
\end{Baseline}
\begin{Baseline}
		In the first transmission of the $f$-th file, the BS ensures that the cache nodes, where $f$-th file is of high-popularity, can decode the $f$-th file. After that, the BS only ensures the delivery of the $f$-th file to the requesting user. The stored file with the lowest popularity at a cache node will be replaced if the cache node is fully occupied and a newly received file has higher popularity.
\end{Baseline}

With the knowledge on uniform distribution of the requesting users and the file popularity $\{p_f | \forall f\}$ at the BS, the average transmission cost versus the average number of file requests in the whole lifetime $ \beta L $ is illustrated in Fig. \ref{Fig:online_sim} (a), where each cache node can store $ 60 \% $ popular files. It can be observed that the proposed scheme (Lemma \ref{lem:online-scheduling}) consumes less transmission resource than both baselines. The gap between the lower-bound (\ref{eqn:tighter_lower_bound}) and the performance of the proposed algorithm is small, demonstrating the tightness of the lower-bound and the good performance of the proposed algorithm. Moreover, the performance gain tends to be a constant when $ \beta L $ is large. This is because all the four schemes (proposed scheme, Baseline 1, 2 and 3) have the same performance if all the high-popularity files have been stored in cache nodes. In other words, the gain of the proposed scheme lies in the phase of cache placement. 

In Fig. \ref{Fig:online_sim} (b), the performance of the four schemes are compared for different skewness factors of Zipf distribution. The proposed algorithm consumes less transmission resource than both baselines. It can be observed that the gain of the proposed scheme over Baseline 2 is more significant when the popularity of the files is close to uniform. This is because the transmission coupling between the high-popularity and low-popularity files are better exploited in the proposed scheme. Moreover, the gap between  the cost lower-bound and the cost of the proposed algorithm is smaller for larger skewness factor. In other words, the proposed scheme is close to the optimal solution when the Zipf distribution is steeper. 

In Fig. \ref{fig:cache_size} (a), the effect of the cache size is evaluated for the skewness factor $ \gamma =1 $, where the average system cost versus the $ M_c / M_F $ (the ratio of high-popularity files) is plotted. It can be observed that when increasing the cache size, more traffic can be offloaded to cache nodes, which leads to less transmission resource consumption at the BS in all three schemes. Moreover, the reduction of average system cost becomes slow in all the schemes and the cost lower-bound when $ M_c / M_F > 50 \% $. This is because the increased cache space is for the files with small request probability, and the average transmission resource saving is also small. It can also be observed that the gap between the lower-bound and the proposed scheme is small, and the proposed scheme has better utilization of the caches than both baseline schemes especially for large cache size. Finally, the performance of Baseline 3 is close to that of Baseline 1 with scarce cache space, and close to that of Baseline 2 with rich cache space.

In Fig. \ref{fig:cache_size} (b), the hitting rate,  defined as $$\text{Hitting rate}\triangleq\mathbb{E}_{\{S_{\tau}|\forall \tau\},N_R}^{\Omega}\bigg[\frac{1}{N_R}\sum_{\tau=1}^{N_R}\mathbf{I}  [\mathbf{l}_{\tau} \in \mathcal{C}_{\mathcal{A}_{\tau},\tau}]\bigg],$$
are compared among the four schemes.
It can be observed that  the BS becomes more conservative on exploiting the cache nodes in traffic offloading from Baseline 2, 3, proposed scheme to Baseline 1. Their hitting rates are therefore in descending order. Although Baseline 2 and 3 have higher hitting rates than that of the proposed scheme, their costs are greater than the proposed scheme, as illustrated in Fig. \ref{fig:cache_size} (a). Hence, maximizing the hitting rate is not a good strategy if the concern is average transmission cost of the BS.

%{\revisesec In Fig. \ref{fig:CDF}, the cumulative distribution function (CDF) of the transmission symbol number for all the file requests is plotted, where the symbol number is $0$, if the traffic is offloaded from the BS to the cache nodes. It can be observed that the leftmost CDF is of the proposed scheme, which indicates that it consumes least transmission symbols. }

In the simulation of the previous figures, the distribution of requesting users is uniform, and the file popularity is also known to the BS. When the non-uniform spatial distribution of the requesting users $ \mathcal{D}$ and the file popularity $\{p_f | \forall f\}$ are not available at the BS, the reinforcement learning algorithm (Algorithm \ref{alg:learning}) can be used in the downlink scheduling. In simulation of both Fig. \ref{fig:Converge} and \ref{fig:learning}, the actual distribution of requesting users is non-uniform as in Fig. \ref{fig:setting}. Fig. \ref{fig:Converge} illustrates the convergence of reinforcement learning algorithm (Algorithm \ref{alg:learning}) for some CaSIs. In Fig. \ref{fig:learning},  the performance of three baselines, the proposed scheme in Lemma \ref{lem:online-scheduling} with inaccurate values of $\{p_f| \forall f\}$ and the wrong assumption of uniform user distribution, and the proposed scheme with reinforcement learning (Algorithm \ref{alg:learning}) are compared. It can be observed that the proposed learning algorithm has the best performance. The gap between the cost lower-bound and the proposed learning algorithm is small. Thus, the proposed learning algorithm is close to the optimal solution.

 Finally, in order to compare the computation times of value function evaluation between the proposed scheme and the optimal solution, we consider the following simplified scenario\footnote{The computation of the optimal solution is prohibitive with the previous simulation configuration.}. There are 2 cache nodes ($N_C=2$) in the cell, each can store $1$ popular file. The library is with $M_F=2$ popular files. The computation time of the optimal value iteration is around $10.7$ times larger than that of the proposed schema. This ratio will be much larger if we increase $M_F$ or $N_C$.

\section{Conclusion}\label{sec:con}
In this paper, we consider the downlink file transmission with the assistance of cache nodes within a finite lifetime. The BS multicasts files to the requesting users and the selected cache nodes reactively, and the cache nodes with decoded files can help offload the traffic via other air interfaces. We formulate the joint optimization of such file placement and delivery as a dynamic programming problem with a random number of stages, and propose an asymptotically optimal way to transform the original problem into a finite-horizon MDP with a fixed number of stages. In order to avoid the curse of dimensionality, we also introduce a low-complexity sub-optimal solution based on linear approximation of the value functions, which can be calculated analytically. The bound of the approximation error is derived. Finally, a reinforcement learning algorithm is proposed to obtain the approximate value functions in the practical scenario, where system statistics is not available.

\section*{Appendix A: Proof of Lemma 1}
 The expression of $W_k(S_k)$ can be rewritten as
 \begin{align} 
W_k(S_k)&=\min\limits_{\{\Omega_k, \Omega_{k+1}, ...\}}\mathbb{E}_{\mathcal{A},\zeta,N_R}\{\sum_{\tau=k}^{N_R}g_{\tau}(S_{\tau}, \Omega_{\tau})\mathbf{I}(k \leq N_R)| S_k\}\nonumber \\
&=\min\limits_{\{\Omega_k, \Omega_{k+1}, ...\}}\mathbb{E}_{\mathcal{A},\zeta,N_R}\{\sum_{\tau=k}^{N_R}g_{\tau}(S_{\tau}, \Omega_{\tau}) |N_R\geq k , S_k\}\Pr(N_R\geq k)\nonumber. 
\end{align}
Given system state $S_k$, we have
 \begin{align} 
W_k(S_k)
&=\min\limits_{\{\Omega_k, \Omega_{k+1}, ...\}}\!\!\mathbb{E}_{\mathcal{A},\zeta,N_R}\{g_k(S_{k}, \Omega_{k})+\!\!\!\!\sum_{\tau=k+1}^{N_R}\! g_{\tau}(S_{\tau}, \Omega_{\tau})\mathbf{I}(k+1 \leq N_R)|N_R\geq k\}\!\Pr(N_R\geq k)
\nonumber \\
&=\min\limits_{\{\Omega_k, \Omega_{k+1}, ...\}}\bigg\{ g_k(S_k,\Omega_k)\Pr(N_R\geq k)\nonumber \\
&+\sum\limits_{S_{k+1}}\Pr[S_{k+1}|S_k,\Omega_{k}]  \mathbb{E}_{\mathcal{A},\zeta,N_R}\{\sum_{\tau=k+1}^{N_R}g_{\tau}(S_{\tau}, \Omega_{\tau})\mathbf{I}(k+1 \leq N_R)|N_R\geq k\}\Pr(N_R\geq k) \! \bigg\} \nonumber\\
&= \min\limits_{\{\Omega_k, \Omega_{k+1}, ...\}}\bigg\{ g_k(S_k,\Omega_k)\Pr(N_R\geq k)  +\sum\limits_{S_{k+1}}\Pr[S_{k+1}|S_k,\Omega_{k}] W_{k+1}(S_{k+1}) \bigg\},
\end{align}

where the last step is because
\begin{align}
W_{k+1}(S_{k+1})&=\min\limits_{\{\Omega_{k+1}, ...\}}\!\bigg\{\mathbb{E}_{\mathcal{A},\zeta,N_R}\{\sum_{\tau=k+1}^{N_R}g_{\tau}(S_{\tau}, \Omega_{\tau})\mathbf{I}(k+1 \leq N_R)|N_R\geq k, S_{k+1}\}\Pr(N_R\geq k)\nonumber \\
&+\underbrace{\mathbb{E}_{\mathcal{A},\zeta,N_R, S_{k+1}}\{\sum_{\tau=k+1}^{N_R}g_{\tau}(S_{\tau}, \Omega_{\tau})\mathbf{I}(k+1 \leq N_R)                                                                                                                     |N_R< k,S_{k+1}\}\Pr(N_R< k)}_{=0}\bigg\}. \nonumber
\end{align}

\section*{Appendix B: Proof of Lemma \ref{lem:reduce stage number}}
\subsubsection{Proof of upper-bound}
Because  $\overline{g}_{max} =\max\limits_{\rho_k,{B}_{k}}\min\limits_{\Omega_k} \mathbb{E}_{\mathcal{A},\eta_k}\Big\{\! g_{k} ({S}_{k}, \Omega_{k})\! \Big \} \geq \mathbb{E}_{\mathcal{A},\zeta_k}\Big \{\! g_{k} ({S}_{k}, \Omega^{*}_{k})\! \Big\}$, $\forall {S}_{k}, k$, we have
	\begin{align*}
	\widetilde{W}_{k}({B}_{k}) &= \mathbb{E}_{\mathcal{A},\zeta}\bigg \{ \sum_{\tau=k}^{L}  
	g_{\tau} ({S}_{\tau}, \Omega^{*}_{\tau}) \Pr(\tau \leq N_R)\bigg| {B}_{k} \bigg\}\\
	&\leq \min\limits_{\{\Omega_{\tau}|\forall \tau\}} \mathbb{E}_{\mathcal{A},\zeta}\bigg \{ \sum_{\tau=k}^{M_R^{\epsilon}}  
	g_{\tau} ({S}_{\tau}, \Omega_{\tau}) \Pr(\tau \leq N_R) \bigg| {B}_{k} \bigg\}+ \sum_{\tau=M_R^{\epsilon}+1}^{L}  
	\overline{g}_{max} \Pr (\tau \leq N_R), \nonumber
	\end{align*}

where the upper-bound is due to non-optimal policy. Similar to \cite{Lv2019,WANG2017}, the analytical expression of $\overline{g}_{max}$ is straightforward.

\subsubsection{Proof of lower-bound} 

	\begin{align*}
	\widetilde{W}_{k}({B}_{k}) 	\!\geq\! \mathbb{E}_{\mathcal{A},\zeta}\bigg \{\! \sum_{\tau=k}^{M_R^{\epsilon}} \!  
	g_{\tau} ({S}_{\tau}, \Omega^{*}_{\tau}) \Pr(\tau \leq N_R)\bigg| {B}_{k} \!\bigg\} \!\geq\! \min\limits_{\{\Omega_{\tau}|\forall \tau\}} \!\!\mathbb{E}_{\mathcal{A},\zeta}\bigg \{\! \sum_{\tau=k}^{M_R^{\epsilon}} \! 
	g_{\tau} ({S}_{\tau}, \Omega_{\tau})\! \Pr(\tau \leq N_R)\bigg| {B}_{k} \!\bigg\}. \nonumber
	\end{align*}
	
\section*{Appendix C: Proof of Lemma \ref{lem:upper-bound}}

Due to page limitation, we only provide the sketch of the proof. In order to prove Lemma \ref{lem:upper-bound}, we only need to prove $ \mathcal{L}_k ({B}_k) \leq \sum_{f,c} J_{k,f,c} ({B}_k) $ for all $ k $ and $ {B}_k $. Hence, we first introduce the following heuristic scheduling policy $ \{\Omega^{\star}_{\tau} | \tau=1,2,...,M_R^{\epsilon} \} $.
\begin{Policy}[Heuristic Scheduling Policy] \label{policy:heuristic}
	The scheduling actions from the first request to the $ M_R^{\epsilon} $-th request are provided below. 
	\begin{itemize}
	\item When (a) the requesting user is in the coverage area of one cache node and (b) the requested file is low-popular for that cache node and it has not been cached, the BS only deliver the file to the requesting user. Thus in this case, $ \Delta \mathcal{B}_{f,k}^{c} = 0$ and
	$
	\{P_{k}^{\star}, N_{k}^{\star} \} = \arg \min (P_{k}N_{k} + w N_{k}) \mathbf{I} [\mathbf{l}_{k} \notin \mathcal{C}_{\mathcal{A}_{k},k}] $ , where the constraints in \eqref{eqn:peak_power} and  \eqref{constrain:user} should be satisfied.
	
	\item Otherwise, the transmission is optimized such that the average total cost on remaining transmissions is minimized. Thus, let $ \Omega_{\tau}^{\star,H} $ be the scheduling policy on remaining transmissions, we have
	\begin{align}
 \{\Omega_{\tau}^{\star,H} | \forall \tau =1,...,M_R^{\epsilon}\} 
 =  \arg \min \mathbb{E}_{\mathcal{A},\zeta} \sum_{\tau=1}^{M_R^{\epsilon}} \sum_{c=0}^{N_C}
g_{\tau} ({S}_{\tau}, \Omega_{\tau}) \Pr(\tau \leq N_R) \mathbf{I}(\mathcal{A}_{\tau} \in \mathcal{F}^H_{c} \cap \mathbf{l}_{\tau} \in \mathcal{C}_c) \nonumber
\end{align}

	where the constraints in (\ref{eqn:peak_power},\ref{constrain:user},\ref{constrain:cache},\ref{constrain:cache_action1},\ref{constrain:cache_action2}) should be satisfied, and $ \mathcal{F}_0^H\triangleq\{1,...,M_F\} $ for notation convenience.
	\end{itemize}
\end{Policy}
With the Policy \ref{policy:heuristic}, let $ \mathcal{U}^{\star}_{k}({B}_{k}) $ be the average cost from the $ k $-th stage to the $ M_R^{\epsilon} $-th stage. Note that the optimal policy is used in $ \mathcal{L}_k $, $ \mathcal{L}_k ({B}_{k}) \leq \mathcal{U}^{\star}_{k} ({B}_{k}) $. Moreover, it is clear that $ \mathcal{U}^{\star}_{k} ({B}_{k}) \!\leq \sum_{f,c} J_{k,f,c} ({B}_k) $, the conclusion of Lemma \ref{lem:upper-bound} can be obtained.

\section*{Appendix D: Proof of Lemma \ref{lem:first lower-bound of value function}}
Due to page limitation, we only provide the sketch of the proof. First,  $\widetilde{W}_{k} ({B}_{k}) \geq \mathcal{L}_k ({B}_{k})$ is given by (\ref{eqn:lower-bound:reduce_system_stage}). In order to prove $\widetilde{\mathcal{L}}_k^1({B}_{k})\leq  \mathcal{L}_k({B}_{k})$, the approach of mathematical induction shall be used. 
\begin{itemize}
	\item {\bf{Step 1}}: When $\tau=M_R^{\epsilon}$, 	
	\begin{align*}
\widetilde{\mathcal{L}}_{M_R^{\epsilon}}^1({B}_{M_R^{\epsilon}})\!&=\!\bar{g}_{min}({B}_{M_R^{\epsilon}})\!\Pr(M_R^{\epsilon} \leq N_R) \!=\!\min_{\Omega_{M_R^{\epsilon}}}\!\mathbb{E}_{\mathcal{A},\zeta}\{ g_{M_R^{\epsilon}}({S}_{M_R^{\epsilon}},\Omega_{M_R^{\epsilon}})|B_{M_R^{\epsilon}} \}\!\Pr(M_R^{\epsilon} \leq N_R)\\&=\!\mathbb{E}_{\mathcal{A},\zeta,N_R}\bigg \{  
g_{M_R^{\epsilon}} ({S}_{M_R^{\epsilon}}, \Omega^{\dagger}_{M_R^{\epsilon}}) \mathbf{I}(M_R^{\epsilon} \leq N_R)|B_{M_R^{\epsilon}}\bigg\} =\mathcal{L}_{M_R^{\epsilon}}({B}_{M_R^{\epsilon}}).
	\end{align*}

	\item {\bf{Step 2}}: Suppose the lower-bound holds for $\tau=k$. When $\tau=k-1$,
		\begin{eqnarray}
		&\widetilde{\mathcal{L}}_{k-1}^1\!(\!{B}_{k-1}\!)\!\!=\bar{g}_{min}({B}_{k-1})\Pr(k-1\leq N_R)+\widetilde{\mathcal{L}}_{k}^1(\pi({B}_{k-1})) \nonumber \\
		&{\mathcal{L}}_{k-1}\!(\!{B}_{k-1}\!)\!\!=\mathbb{E}_{\mathcal{A},\zeta,N_R}\!\bigg \{\!  
			g_{k-1} \!(S_{k-1}, \Omega^{\dagger}_{k-1}) \mathbf{I}(k-1 \leq N_R)|B_{k-1}\!\!\bigg\}
		\!\!	+ \!\!\sum_{{S}_{k}}\!{\mathcal{L}}_{k}(\!{B}_{k}\!)\!\Pr\! \bigg(\!\!{B}_{k}|{B}_{k-1},\Omega^{\dagger}_{k-1}\!\!\bigg)\nonumber\\
		&\bar{g}_{min}({B}_{k-1})\Pr(k-1 \leq N_R) \leq \mathbb{E}_{\mathcal{A},\zeta,N_R}\bigg \{  
		g_{k-1} ({S}_{k-1}, \Omega^{\dagger}_{k-1}) \mathbf{I}(k-1 \leq N_R)|B_{k-1}\bigg\}.\nonumber
		\end{eqnarray}
	
Given ${B}_{k-1}$, we have $\widetilde{\mathcal{L}}_{k}^1(\pi({B}_{k-1}))\leq  \mathcal{L}_{k}(\pi({B}_{k-1}))\leq  \mathcal{L}_{k}({B}_{k}), \forall {B}_{k}$. Hence, 
$
\widetilde{\mathcal{L}}_{k}^1({B}_{k}^{\pi}) \leq \sum_{{B}_{k}}{\mathcal{L}}_{k}({B}_{k})\Pr \Big({B}_{k}|{B}_{k-1},\Omega^{\dagger}_{k-1}\Big).
$
As a result, $\widetilde{\mathcal{L}}_{k-1}^1({B}_{k-1})\leq  \mathcal{L}_{k-1}({B}_{k-1})$, $\forall {B}_{k-1}$. 
\end{itemize}

\bibliographystyle{IEEEtran}
\bibliography{Cache2}

\end{document}